\documentclass[11pt,letterpaper]{article}

\usepackage{amssymb,amsmath,amssymb,bbm,mathtools} 
\PassOptionsToPackage{hyphens}{url}

\usepackage{algorithmic,algorithm}
\usepackage{eqparbox,array}
\usepackage{nicefrac}
\usepackage{xspace}

\usepackage[margin=1in]{geometry}
\usepackage{times}

\usepackage{amsthm}
\usepackage{thmtools,thm-restate}

\newtheorem{lemma}{Lemma}

\newtheorem{corollary}{Corollary}

\theoremstyle{definition}
\newtheorem{definition}{Definition}

\newtheorem{example}{Example}

\usepackage[square,sort,semicolon,numbers]{natbib} 

\usepackage[usenames,svgnames,xcdraw,table]{xcolor}
\definecolor{DarkBlue}{rgb}{0.1,0.1,0.5}
\definecolor{DarkGreen}{rgb}{0.1,0.5,0.1}
\usepackage{hyperref}
\hypersetup{
     colorlinks   = true,
     linkcolor    = magenta, 
     urlcolor     = magenta, 
	 citecolor    = magenta 
}

\usepackage[capitalise,noabbrev]{cleveref}

\makeatletter
\renewcommand{\paragraph}{%
  \@startsection{paragraph}{4}%
  {\z@}{1.0ex \@plus 1ex \@minus .2ex}{-1em}%
  {\normalfont\normalsize\bfseries}%
}
\let\oldnl\nl
\newcommand{\nonl}{\renewcommand{\nl}{\let\nl\oldnl}}
\makeatother

\usepackage{enumerate}
\usepackage{enumitem}
\usepackage[T1]{fontenc}
\usepackage{bbm}
\usepackage{comment}

\newcount\Comments  
\Comments=1   
\usepackage{color}
\definecolor{darkgreen}{rgb}{0,0.5,0}
\definecolor{purple}{rgb}{1,0,1}
\newcommand{\kibitz}[2]{\ifnum\Comments=1\textcolor{#1}{#2}\fi}


\renewcommand{\hat}{\widehat}
\newcommand{\floor}[1]{\lfloor {#1} \rfloor}

\newcommand{\argmax}[1]{{\operatorname{arg}\operatorname{max}}_{#1}}

\newcommand{\set}[1]{{\left\{#1\right\}}}
\newcommand{\tuple}[1]{{\left\langle#1\right\rangle}}
\newcommand{\bbN}{\mathbb{N}}
\newcommand{\bbR}{\mathbb{R}}
\newcommand{\calA}{\mathcal{A}}
\newcommand{\calB}{\mathcal{B}}
\newcommand{\calW}{\mathcal{W}}
\newcommand{\calI}{\mathcal{I}}
\newcommand{\calP}{\mathcal{P}}
\newcommand{\calU}{\mathcal{U}}
\newcommand{\calQ}{\mathcal{Q}}
\newcommand{\calT}{\mathcal{T}}

\newcommand{\EFone}{\textsc{EF1}\xspace}
\newcommand{\MMS}{\textsc{MMS}\xspace}
\newcommand{\sw}{\operatorname{SW}}

\newcommand{\OPT}{\operatorname{OPT}}
\newcommand{\hmms}{$\nicefrac{1}{2}$-$\MMS$\xspace}

\newcommand{\Gmms}{\Gamma_{\MMS}}
\newcommand{\Gsingle}{\Gamma_{\text{single}}}
\newcommand{\Ghard}{\Gamma_{\text{hard}}}

\Crefname{ALC@unique}{Line}{Lines}
\newcommand{\algefone}{\textsc{Alg}-$\EFone$-\textsc{High}\xspace}
\newcommand{\algefoneabs}{\textsc{Alg}-$\EFone$-\textsc{Abs}\xspace}
\newcommand{\algmms}{\textsc{Alg}-$\MMS$-\textsc{High}\xspace}
\newcommand{\algmmsabs}{\textsc{Alg}-$\MMS$-\textsc{Abs}\xspace}

\title{\bfseries Optimal Bounds on the Price of Fairness for Indivisible Goods}

\author{
	Siddharth Barman\thanks{Indian Institute of Science. {\tt barman@iisc.ac.in}}
	\and
	Umang Bhaskar\thanks{Tata Institute of Fundamental Research. {\tt umang@tifr.res.in}}
	\and
	Nisarg Shah\thanks{University of Toronto. {\tt nisarg@cs.toronto.edu}}
}
\date{}

\begin{document}	

\maketitle

\begin{abstract}
In the allocation of resources to a set of agents, how do fairness guarantees impact the social welfare? A quantitative measure of this impact is the price of fairness, which measures the worst-case loss of social welfare due to fairness constraints. While initially studied for divisible goods, recent work on the price of fairness also studies the setting of indivisible goods.

In this paper, we resolve the price of two well-studied fairness notions for the allocation of indivisible goods: envy-freeness up to one good ($\EFone$), and approximate maximin share ($\MMS$). For both $\EFone$ and \hmms guarantees, we show, via different techniques, that the price of fairness is $O(\sqrt{n})$, where $n$ is the number of agents. From previous work, it follows that our bounds are tight. Our bounds are obtained via efficient algorithms. For \hmms, our bound holds for additive valuations, whereas for $\EFone$, our bound holds for the more general class of subadditive valuations. This resolves an open problem posed by Bei et al.~(2019). 
\end{abstract}

\section{Introduction}\label{sec:intro}
What does it mean for an allocation of resources to a set of agents to be fair? 
The most compelling notion of fairness advocated in prior work is \emph{envy-freeness} (EF)~\cite{Fol67}, which demands that no agent envy another agent (i.e., value the resources allocated to the other agent more than those allocated to herself). When the resources to be allocated contain \emph{indivisible} goods, guaranteeing envy-freeness is impossible.\footnote{The canonical example is that of a single indivisible good and two agents; the agent who does not receive the good will inevitably envy the other.} Thus researchers have sought relaxations such as \emph{envy-freeness up to one good}~\cite{Bud11,CKMP+19}, which states that it should be possible to eliminate any envy one agent has toward another by removal of at most one good from the latter's allocation. 

A different relaxation of envy-freeness advocated in the context of indivisible goods is the \emph{maximin share guarantee} (\MMS)~\cite{Bud11}. When there are $n$ agents, the maximin share value of an agent is defined as the maximum value she could obtain if she were to divide the goods into $n$ bundles, and then receive the least-valued bundle according to her. An \MMS allocation is an allocation where each agent receives at least her maximin share value. This generalizes the classical cut-and-choose protocol for dividing a cake between two agents, where one agent cuts the cake into two pieces, the other agent chooses a piece, and the first agent then receives the remaining piece. Even for agents with additive valuations over the goods,\footnote{A valuation function is additive if the value of a bundle of goods is the sum of the values of the individual goods in the bundle.} \MMS allocations may not exist~\cite{PW14,KPW16}. However, an allocation where each agent receives at least $\nicefrac{3}{4}$-th of her maximin share value --- i.e., a $\nicefrac{3}{4}$-\MMS allocation --- always exists~\cite{garg2020improved,ghodsi2018fair}.

While fairness is important, it is clearly not the only objective of interest. A competing criterion is the aggregate value of the agents from the resources they receive, or the \emph{social welfare} of the allocation. The tradeoff between fairness and social welfare is quantitatively measured by the \emph{price of fairness}, the supremum ratio of the maximum welfare of any allocation to the maximum welfare of any allocation satisfying the desired fairness notion, where the supremum is over all possible problem instances. Intuitively, this measures the factor by which welfare may be lost to achieve the desired fairness. 

\citet{CKKK09} initiated the study of the price of fairness in the canonical setting of cake-cutting, in which a heterogeneous \emph{divisible} good is to be allocated (thus, envy-freeness can be guaranteed) and agents have additive valuations. They proved that the price of envy-freeness is between $\Omega(\sqrt{n})$ and $O(n)$, where $n$ is the number of agents. Later, \citet{BFT11} closed the gap by proving that the correct bound is $\Theta(\sqrt{n})$, and that the matching upper bound can be achieved by maximizing the Nash welfare~\cite{BFT11}.

For \emph{indivisible} goods and additive valuations, \citet{BeiLMS19} studied the price of fairness for various notions of fairness. They showed that the price of envy-freeness up to one good ($\EFone$) is between $\Omega(\sqrt{n})$ and $O(n)$. One might immediately wonder if maximizing the Nash welfare, which is known to satisfy EF1 when allocating indivisible goods~\cite{CKMP+19}, can be used to derive a matching $O(\sqrt{n})$ upper bound, like in cake-cutting. Unfortunately, \citeauthor{BeiLMS19} also showed that maximizing the Nash welfare results in a loss of welfare by $\Omega(n)$ factor in the worst case, and posed settling the price of $\EFone$ as a significant open question. For approximate $\MMS$ allocations (since exact $\MMS$ allocations may not exist), the price of fairness is not studied earlier, though a lower bound of $\Omega(\sqrt{n})$ can be obtained from previous constructions~\cite{CKKK09}. 

\paragraph{Our contributions.} We first describe our model more formally. We consider the allocation of $m$ indivisible goods to $n$ agents. An allocation $\calA$ partitions the goods into $n$ bundles, where $A_i$ is the bundle assigned to agent $i$. The preferences of agent $i$ are encoded by a valuation function $v_i$, which assigns a non-negative value to every subset of goods. Then, $v_i(A_i)$ is the value to agent $i$ under $\calA$, and $\sum_i v_i(A_i)$ is the social welfare of $\calA$. The price of a fairness notion, as defined above, is the supremum ratio of the maximum social welfare to the maximum social welfare subject to the fairness notion. 

Our main contribution is to comprehensively settle the price of \EFone and \hmms fairness notions. First, we show that the price of \EFone is $O(\sqrt{n})$. This matches the $\Omega(\sqrt{n})$ lower bound due to \citet{BeiLMS19}. The lower bound is for additive valuations, whereas our upper bound holds for the more general class of subadditive valuations. Hence our work settles this open question for all valuation classes between additive and subadditive. Our upper bound is obtained via a polynomial-time algorithm. For subadditive valuations, given the lack of succinct representation, we assume access to a \emph{demand-query oracle}.\footnote{A formal description of valuation classes and query models is in~\Cref{sec:prelim}.} As a consequence of this result, we also settle the price of a weaker fairness notion --- proportionality up to one good (Prop1) --- as $\Theta(\sqrt{n})$ for additive valuations. 

For the \hmms fairness notion and additive valuations, we similarly show, via a different algorithm, that the price of fairness is $\Theta(\sqrt{n})$. We show that for a fixed $\epsilon > 0$, a $(\nicefrac{1}{2}-\epsilon)$-$\MMS$ allocation with welfare within $O(\sqrt{n})$ factor of the optimal can be computed in polynomial time. 

\paragraph{Related work.} For resource allocation, the price of fairness was first studied by \citet{CKKK09} and \citet{BFT11}. While \citet{CKKK09} left open the question of the price of envy-freeness in cake-cutting, later settled by \citet{BFT11} as $\Theta(\sqrt{n})$, they proved that the price of a weaker fairness notion --- proportionality --- is $\Theta(\sqrt{n})$, and the price of an incomparable fairness notion --- equitability --- is $\Theta(n)$. They also extended their analysis to the case where the agents \emph{dislike} the cake (i.e., a divisible \emph{chore} is being allocated), and the case with indivisible goods or chores. However, in case of indivisible goods, notions such as envy-freeness, proportionality, and equitability cannot be guaranteed. The analysis of \citet{CKKK09} simply excluded instances which do not admit allocations meeting these criteria. 

\citet{BeiLMS19} instead focused on notions that can be guaranteed with indivisible goods and chores, such as envy-freeness up to one good (EF1). While they did not settle the price of EF1 (which is the focus of our work), they showed that the price of popular allocation rules such as the maximum Nash welfare rule~\cite{CKMP+19}, the egalitarian rule~\cite{Rawls71}, and the leximin rule~\cite{BM04,kurokawa2018leximin} is $\Theta(n)$. They also considered allocations that are balanced, i.e., give all agents an approximately equal number of goods, and settled the price of this guarantee as $\Theta(\sqrt{n})$. To the best of our knowledge, our work is the first to consider the price of approximate maximin share guarantee. Recent work has also analyzed the price of fairness in cake-cutting when the piece allocated to each agent is required to be contiguous~\cite{aumann2015efficiency}. \citet{suksompong2019fairly} extended this analysis to indivisible goods.

The price of fairness has also been analyzed in other paradigms at the intersection of computer science and economics, such as in kidney exchange~\cite{DPS14}, and is inspired from other similar worst-case measures such as the price of anarchy and of stability in game theory~\cite{koutsoupias1999worst}. More distantly, the price of fairness has been extensively analyzed in machine learning (see, e.g.,~\citet{mehrabi2019survey}).

\section{Preliminaries}\label{sec:prelim}

For $k \in \bbN$, define $[k] \coloneqq \set{1,\ldots,k}$. We study discrete fair division problems, wherein a set $[m]$ of indivisible goods need to be partitioned in a fair manner among a set $[n]$ of agents. 

\paragraph{Agent valuations.} The cardinal preference of each agent $i \in [n]$ (over the goods) is specified via the valuation function $v_i: 2^{[m]} \mapsto \bbR_{\ge 0}$, where $v_i(S) \in \bbR_{\ge 0}$ is the value that agent $i$ has for the subset of goods $S \subseteq [m]$.   We assume valuations are normalized ($v_i(\emptyset) = 0$), nonnegative ($v_i(S) \geq 0$ for all $S \subseteq [m]$), and monotone ($v_i(A) \leq v_i(B)$ for all $A \subseteq B \subseteq [m]$). A fair-division instance is represented by a tuple $\tuple{[n], [m], \{v_i\}_{i \in [n]}}$. We primarily consider two valuation classes:

\begin{itemize}
\item Additive: $v_i(S) = \sum_{g \in S} v_i(g)$ for each agent $i \in [n]$ and subset of goods $S \subseteq [m]$. Here, $v_i(g)$ denotes the value that agent $i$ has for good $g \in [m]$. 
\item Subadditive: $v_i(S \cup T) \leq v_i(S) + v_i(T)$ for each agent $i \in [n]$ and all subsets $S, T \subseteq [m]$. 
\end{itemize}
Note that the family of subadditive valuations encompasses additive valuations.

\paragraph{Oracle access.} Since describing subadditive valuations may require size exponential in the number of agents and goods, to design efficient algorithms, we assume oracle access to the valuation functions. The literature focuses on two prominent query models.
\begin{itemize}
	\item Value query: Given an agent $i \in [n]$ and a subset of goods $S \subseteq [m]$, the oracle returns $v_i(S)$. 
	\item Demand query: Given an agent $i \in [n]$ and a price $p_g \in \bbR_{\ge 0}$ for each good $g \in [m]$, the oracle returns a ``profit-maximizing'' set $S^* \in \argmax{S \subseteq [m]} v_i(S)-\sum_{g \in S} p_g$. 
\end{itemize}
Demand queries are \emph{strictly} more powerful than value queries (e.g.,~\cite[Section 11.5]{NisanRTV07}).

\paragraph{Allocations.} Write $\Pi_n([m])$ to denote the set of all $n$-partitions of the set of goods $[m]$. An allocation $\calA = (A_1, A_2, \ldots, A_n) \in \Pi_n([m])$ corresponds to an $n$-partition wherein the subset $A_i \subseteq [m]$ is assigned to agent $i \in [n]$; such a subset is called a bundle. The term partial allocation denotes an $n$-partition $(P_1, P_2, \ldots, P_n) \in \Pi_n(S)$ of a subset of goods $S \subseteq [m]$; as before, subset $P_i$ is assigned to agent $i \in [n]$. 

\paragraph{Fairness.} The notions of fairness considered in this work are defined next. 

\begin{definition}[Prop1] An allocation $\calA$ is called proportional up to one good (Prop1) if for each agent $i \in [n]$, there exists a good $g \in [m]$ such that $v_i(A_i \cup \set{g}) \ge v_i([m])/n$.
\end{definition}

\begin{definition}[EF1] An allocation $\calA$ is called envy-free up to one good (\EFone) iff for all agents $i, j \in [n]$ with $A_j \neq \emptyset$, there exists a good $g \in A_j$ such that $v_i(A_i) \geq v_i(A_j \setminus \set{g})$. 
\end{definition}

It is easy to check that EF1 implies Prop1 for additive valuations. In a fair-division instance $\calI = \tuple{[n], [m], \{v_i\}_{i \in [n]}}$, the maximin share of an agent $i \in [n]$ is defined as 
\begin{align*}
\MMS_i & \coloneqq \max_{(P_1, \ldots, P_n) \in \Pi_n([m])} \ \min_{j \in [n]} \ v_i (P_j)
\end{align*}

\begin{definition}[$\alpha$-$\MMS$] For $\alpha \in [0,1]$, an allocation $\calA$ is called $\alpha$-approximate maximin share fair ($\alpha$-\MMS) if $v_i(A_i) \geq \alpha \cdot \MMS_i$ for each agent $i \in [n]$.  
\end{definition}

\paragraph{Social welfare.} The social welfare of an allocation $\calA$, denoted $\sw(\calA)$, is defined as the sum of the values that the agents derive from the allocation: $\sw(\calA) = \sum_{i=1}^n v_i(A_i)$. We will use $\mathcal{W}^*=(W^*_1, W^*_2, \ldots, W^*_n)$ to denote a social welfare maximizing allocation, i.e., $\mathcal{W}^*$ $\in \argmax{\calA \in \Pi_n([m])} \sw(\calA)$, and $\OPT $ to denote the optimal social welfare, i.e., $\OPT = \sw(\mathcal{W}^*)$. 

\paragraph{Price of fairness.} Given a fairness property $X$, the price of $X$ is the supremum, over all fair division instances with $n$ agents and $m$ goods, of the ratio between the the maximum social welfare of any allocation to the maximum social welfare of any allocation satisfying property $X$. 

\paragraph{Scaling.} To ensure that agent valuations are on the same scale, much of the literature on fair division assumes that agent valuations are \emph{scaled}, i.e. $v_i([m]) = 1$ for all $i \in [n]$. Noting that unscaled valuations are common in other areas of social choice (e.g.,~\cite{DuttaPS07}), we consider both scaled and unscaled valuations. 

\section{Price of Envy-Freeness Up To One Good (EF1)}\label{sec:ef1}

We begin by studying the price of fairness for $\EFone$ allocations for agents with subadditive valuations. For scaled additive valuations, \citet{BeiLMS19} show that the price of $\EFone$ is between $\Omega(\sqrt{n})$and $O(n)$. We tighten their upper bound to $O(\sqrt{n})$ (thus matching their lower bound) even when the valuations are subadditive. For unscaled valuations, we show that the bound is $\Theta(n)$. Our main result in this section is as follows. 

\begin{restatable}{theorem}{TheoremEFone}
\label{theorem:efone}
The price of $\EFone$ is $O(\sqrt{n})$ for scaled subadditive valuations and $O(n)$ for unscaled subadditive valuations. Both bounds are tight even when the valuations are additive. 
\end{restatable}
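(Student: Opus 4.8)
The plan is to prove each bound by exhibiting a \emph{single} \EFone allocation whose welfare is within the claimed factor of $\OPT$; since the price of fairness is $\OPT$ divided by the best \EFone welfare, lower-bounding the welfare of one well-chosen \EFone allocation suffices for the upper bounds. The matching lower bounds are the easier half. The $\Omega(\sqrt n)$ bound for scaled additive valuations is already known — it is the construction of \citet{BeiLMS19} cited above — and for unscaled valuations I would give an explicit $\Omega(n)$ instance: $n$ agents and $m=n$ goods where agent $1$ values every good at $1$ while every other agent values every good at a tiny $\epsilon>0$. There $\OPT=n$ (assign all goods to agent $1$), whereas \EFone forces every agent's bundle to have size within one of each other's, so $|A_1|\le 2$ and the best \EFone welfare is $O(1)$, giving ratio $\Omega(n)$.

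For the upper bounds the first tool is a structural consequence of \EFone under subadditivity. Writing $h_i \coloneqq \max_{g\in[m]} v_i(g)$, in any \EFone allocation $\calA$ the bundle agent $i$ most envies has value at least $\tfrac1n \sum_j v_i(A_j) \ge \tfrac1n v_i([m])$ (using subadditivity, $\sum_j v_i(A_j)\ge v_i(\cup_j A_j)=v_i([m])$), and deleting one good costs at most $h_i$, so $v_i(A_i) \ge \tfrac1n v_i([m]) - h_i$. Taking the better of a near-proportional \EFone allocation and one that routes each agent's single most valued good to absorb the $h_i$ loss yields an \EFone allocation of welfare $\Omega\bigl(\tfrac1n \sum_i v_i([m])\bigr) \ge \Omega(\OPT/n)$, since $\OPT \le \sum_i v_i([m])$. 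This establishes the $O(n)$ bound in both the scaled and unscaled regimes.

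The hard part is improving the scaled bound from $O(n)$ to $O(\sqrt n)$, because the $\tfrac1n$-share guarantee above is tight per agent. I would split on the size of $\OPT$ (recall $1 \le \OPT \le n$ for scaled valuations). When $\OPT \le \sqrt n$ it suffices to show that some \EFone allocation always has welfare at least $1$: comparing against the all-to-one allocation, whose welfare is exactly $v_i([m])=1$, and repairing envy by transfers, one argues the welfare-maximizing \EFone allocation never drops below $1$, so the price is at most $\sqrt n$. When $\OPT > \sqrt n$ the value is necessarily concentrated and the target becomes an \EFone allocation of welfare $\Omega(\sqrt n)$. In the additive case the intuition is clean: each agent owns at most $\sqrt n$ goods worth $\ge 1/\sqrt n$ (their values sum to $1$), which caps how many agents can be made ``rich'' at once; \algefone emulates this by thresholding via demand queries, handing a small set of agents a high-value bundle returned by a demand query and distributing the remaining low-value goods to equalize envy. \textbf{The main obstacle}, and the crux of the theorem, is the simultaneous accounting: proving that this construction both recovers an $\Omega(1/\sqrt n)$ fraction of $\OPT$ \emph{and} stays globally \EFone, using only demand-query access to subadditive valuations — so that bundles cannot be taken apart good-by-good as the additive picture suggests. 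I expect the welfare analysis to balance the contribution of the few high-value agents against the per-agent $\tfrac1n$-share from the structural lemma, with the two terms crossing over precisely at the $\sqrt n$ threshold.
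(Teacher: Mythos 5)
Your lower bounds are fine (your unscaled instance is essentially the paper's), and your high-level skeleton --- an absolute welfare guarantee that handles the unscaled bound and the low-$\OPT$ regime, plus a separate argument when $\OPT = \Omega(\sqrt{n})$ --- matches the paper's. But both halves of the upper bound have genuine gaps. First, the absolute guarantee: your per-agent bound $v_i(A_i) \ge \tfrac{1}{n} v_i([m]) - h_i$ is too lossy, and ``taking the better of two allocations'' cannot repair it. When many agents share the same favorite good, no allocation whatsoever recovers welfare comparable to $\sum_i h_i$; the best one can guarantee from single-good assignments is a maximum-weight matching with welfare $\Omega\bigl(\tfrac{1}{n}\sum_i h_i\bigr)$. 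Writing $x = \tfrac{1}{n}\sum_i v_i([m])$ and $L = \sum_i h_i$, your two bounds combine only to $\max(x - L,\, L/n) \ge x/(n+1)$, certifying welfare $\Omega(x/n)$, i.e., price $O(n^2)$ unscaled --- and in the scaled case only $\Omega(1/n)$ welfare, which also breaks your Case $\OPT \le \sqrt{n}$, where you need $\Omega(1)$ welfare (your alternative ``repair the all-to-one allocation by transfers'' claim is asserted, not proved). The paper's fix is quantitative: summing the EF1 inequalities $v_i(A_i) \ge v_i(A_j) - v_i(g_j)$ over $j$, the removed goods $g_j$ lie in disjoint bundles and hence are distinct, so the total loss is $\tfrac{1}{n}\sum_{g \in G_i} v_i(g)$ (the top-$n$ goods of agent $i$), not $h_i$; and a maximum-weight matching has welfare at least $\tfrac{1}{n}\sum_i \sum_{g \in G_i} v_i(g)$ (decompose the $n$-left-regular agent/top-goods bipartite graph into $n$ matchings and apply pigeonhole), which cancels the loss term exactly, yielding $\sw(\calB) \ge \tfrac{1}{2n}\sum_i v_i([m])$ after extending the matching to a complete EF1 allocation via Lipton et al.

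Second, and more importantly, the scaled case $\OPT > \sqrt{n}$ --- which you yourself call ``the main obstacle'' --- is not proved: what you offer is thresholding intuition, and additive intuition at that. The paper's actual mechanism is different. It computes a reference allocation $\calW$ with $\sw(\calW) \ge \OPT/2$ (Feige's algorithm; this is the only place demand queries are used), orders the goods on a line so that each $W_i$ is a contiguous block, initializes each agent with her best single good of $W_i$, and repeatedly gives an envious agent the inclusion-wise \emph{minimal} envied prefix of an unassigned connected component; minimality preserves EF1, and strict value improvement over the $O(m^2)$ connected bundles forces termination. The welfare analysis is then a counting argument tied to the line structure: the final partial allocation and its unassigned components form at most $2n+1$ intervals, which can have at most $3n+1$ intersections with the $n$ intervals $W_i$; hence at most roughly $\sqrt{n}$ agents meet more than $3\sqrt{n}$ pieces, and every other agent $i$ satisfies $v_i(W_i) \le 2 q_i\, v_i(P_i) \le 6\sqrt{n}\, v_i(P_i)$, because unassigned pieces are not envied and assigned pieces are envied by at most one good whose value is at most $v_i(P_i)$. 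Summing gives $\sw(\calW) \le \sqrt{n} + 6\sqrt{n}\cdot \sw(\calP)$. Without this intersection-counting argument (or a substitute for it), your proposal does not establish the $O(\sqrt{n})$ bound, and with it the demand-query/subadditivity concerns you raise are handled as a byproduct.
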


We begin by proving the upper bounds, which are established using \Cref{alg:efone-half,alg:efone}. 

\subsection{An Absolute Welfare Guarantee}\label{sec:ef1-absolute}

First, we show that when agents have subadditive valuations $\set{v_i}_{i \in [n]}$ (not necessarily scaled), there always exists an $\EFone$ allocation $\calA$ with social welfare $\sw(\calA) \ge \frac{1}{2n} \sum_{i=1}^n v_i([m])$. 

This result has two implications. First, since $\sum_{i=1}^n v_i([m])$ is a trivial upper bound on the optimal social welfare $\OPT$, the result establishes an $O(n)$ upper bound on the price of $\EFone$. For unscaled valuations, this is exactly the bound we need. For scaled valuations, we need to improve this to $O(\sqrt{n})$. Since $\sum_{i=1}^n v_i([m]) = n$ under scaled valuations, the result gives $\sw(\calA) \ge \nicefrac{1}{2}$. Hence, if $\OPT = O(\sqrt{n})$, then we have the desired $O(\sqrt{n})$ upper bound. We analyse the case when $\OPT = \Omega(\sqrt{n})$ in \Cref{sec:ef1-high-opt}. Our absolute welfare guarantee is derived through \Cref{alg:efone-half}. 

\begin{algorithm}[htb]
  \caption{\algefoneabs} \label{alg:efone-half}  
  \textbf{Input:} Fair-division instance $\calI= \langle [n],[m],\set{v_i}_{i \in [n]} \rangle$ with value-query oracle access to the subadditive valuations $v_i$s.
  
  \textbf{Output:} An $\EFone$ allocation $\calB$ with social welfare $\sw(\calB) \ge \frac{1}{2n} \sum_{i \in [n]} v_i([m])$.
  
  \begin{algorithmic}[1]
  \STATE Consider the weighted bipartite graph $G = ([n] \cup [m], [n] \times [m])$ with weight of each edge $(i, g) \in [n] \times [m]$ set as $v_i(g)$. Let $\pi$ be a maximum-weight matching in $G$ that matches all nodes in $[n]$. \label{step:max-match}
  \STATE Construct the partial allocation $\calB'$ such that $B'_i = \set{\pi(i)}$ for each $i \in [n]$. 
  \COMMENT{Note that $\calB'$ is trivially $\EFone$ because each agent is assigned a single good.}
  \STATE Use the algorithm of \citet{LMMS04} to extend the partial $\EFone$ allocation $\calB'$ into a complete $\EFone$ allocation $\calB$ such that $v_i(B_i) \ge v_i(B'_i)$ for each agent $i \in [n]$. \label{step:Lipton-abs}
  \RETURN Allocation $\calB$
\end{algorithmic}
\end{algorithm}

\begin{lemma}
\label{lemma:efone-half}
Let $\calI = \langle [n], [m], \{v_i\}_{i \in [n]} \rangle$ be a fair-division instance in which agent valuations are subadditive. Then, given value-query oracle access to the valuations, \algefoneabs (\Cref{alg:efone-half}) efficiently computes an $\EFone$ allocation $\calB$ with social welfare $\sw(\calB) \ge \frac{1}{2n} \sum_{i \in [n]} v_i([m])$. 
\end{lemma}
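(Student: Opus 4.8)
The plan is to establish two things about the output $\calB$: that it is \EFone, and that its welfare meets the stated bound. The \EFone guarantee and the polynomial running time are the easy part and follow from the algorithm's structure. The partial allocation $\calB'$ assigns each agent a single good, so it is trivially \EFone; \Cref{step:Lipton-abs} then invokes the procedure of \citet{LMMS04}, which extends any partial \EFone allocation to a complete \EFone allocation without decreasing any agent's value. Hence $\calB$ is \EFone and satisfies $v_i(B_i) \ge v_i(B'_i) = v_i(\pi(i))$ for every $i$, and both the maximum-weight bipartite matching and the extension run in polynomial time using only value queries. If $m < n$, no matching saturates $[n]$; I would first pad the instance with $n-m$ dummy goods that every agent values at $0$, which leaves each $v_i([m])$ and subadditivity unchanged.

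For the welfare bound I would combine two lower bounds on $\sw(\calB) = \sum_i v_i(B_i)$. The first is $\sw(\calB) \ge \sum_i v_i(\pi(i))$, the weight of the matching, which is immediate from $v_i(B_i) \ge v_i(\pi(i))$. This alone is \emph{not} enough: when an agent spreads her value over many goods, the single matched good captures only a small fraction of $v_i([m])$, so the matching weight can fall well below $\frac{1}{2n}\sum_i v_i([m])$; using that $\calB$ is a \emph{complete} allocation is therefore essential. The second bound exploits exactly this: for all $i,j$, \EFone gives a good $g \in B_j$ with $v_i(B_i) \ge v_i(B_j \setminus \set{g})$, and subadditivity yields $v_i(B_j) \le v_i(B_i) + \max_{g \in B_j} v_i(g)$. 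Summing over $j$ and applying subadditivity once more gives the proportionality-type inequality $v_i([m]) \le \sum_j v_i(B_j) \le n\, v_i(B_i) + \sum_j \max_{g \in B_j} v_i(g)$, i.e. $v_i(B_i)$ is at least $v_i([m])/n$ up to a correction coming from the largest good in each bundle.

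The crux is to control the residual top-good terms $\sum_{i,j} \max_{g \in B_j} v_i(g)$, and this is where the \emph{maximum-weight} choice of $\pi$ enters. A swap argument shows that for every agent $i$ and good $g$ one has $v_i(g) \le v_i(\pi(i)) + w(g)$, where $w(g)$ is the matching weight of $g$ (namely $v_k(g)$ if $g = \pi(k)$, and $0$ if $g$ is unmatched), since reassigning $i$ onto $g$ cannot improve an optimal matching. Because each matched good lies in exactly one bundle, substituting this into the proportionality bound and summing over all agents expresses $\sum_i v_i([m])$ purely in terms of $\sw(\calB)$ and the total matching weight, which is at most $\sw(\calB)$; rearranging then yields the absolute guarantee. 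I expect the main obstacle to be the bookkeeping in this final step: the naive combination of the matching-weight bound and the \EFone proportionality bound loses a constant factor, and recovering the sharp $\frac{1}{2n}$ (rather than a weaker $\Omega(1/n)$) requires carefully balancing the two bounds — for instance by taking the right weighted average of them and using that the per-bundle maxima of the matching weights sum to at most the total matching weight.
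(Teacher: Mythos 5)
Your overall architecture --- maximum-weight matching, Lipton extension, then the \EFone-plus-subadditivity proportionality inequality with a residual of per-bundle top goods --- is the same as the paper's, and your swap inequality $v_i(g) \le v_i(\pi(i)) + w(g)$ is a correct consequence of matching optimality. The genuine gap is in the last step: your way of controlling the residual provably cannot reach the stated constant $\frac{1}{2n}$. Write $M = \sum_k v_k(\pi(k))$ and $\Sigma = \sum_i v_i([m])$. The \EFone inequality gives $n\, v_i(B_i) \ge v_i([m]) - \sum_j v_i(S_j)$, where the $S_j \subseteq B_j$ are disjoint sets of size at most one. Your swap bound charges $v_i(\pi(i))$ once \emph{per bundle}, so it yields $\sum_j v_i(S_j) \le n\, v_i(\pi(i)) + \sum_j w(g^*_{ij}) \le n\, v_i(\pi(i)) + M$; summing over $i$ gives $n \cdot \sw(\calB) \ge \Sigma - 2nM$, and combined with $\sw(\calB) \ge M$ this yields only $\sw(\calB) \ge \frac{1}{3n}\Sigma$. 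Moreover, no ``weighted average'' of these two inequalities can do better: treating $M$ as a free parameter, the choice $M = \frac{\Sigma}{3n}$ makes the two lower bounds on $\sw(\calB)$ coincide at $\frac{\Sigma}{3n}$, so $\frac{1}{3n}$ is the exact limit of what your two inequalities imply. The factor you lose is structural: different agents' per-bundle maxima can all sit on the same matched goods, so $\sum_i \sum_j w(g^*_{ij})$ really can be as large as $nM$, and the swap route cannot avoid paying it.

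The missing idea is a stronger lower bound on the matching weight itself, which is exactly what the paper proves. Let $G_i$ be the set of the $n$ goods most valuable to agent $i$. The bipartite graph with edges $\set{(i,g) : i \in [n], g \in G_i}$ is $n$-left-regular, so by Hall's theorem it decomposes into $n$ disjoint left-perfect matchings; by pigeonhole one of them has weight at least $\frac{1}{n}\sum_{i}\sum_{g \in G_i} v_i(g)$, hence $M \ge \frac{1}{n}\sum_i \sum_{g \in G_i} v_i(g)$. (Your swap argument only gives the weaker $M \ge \frac{1}{2n}\sum_i \sum_{g \in G_i} v_i(g)$ --- summing $v_i(g) \le v_i(\pi(i)) + w(g)$ over $g \in G_i$ and over $i$ produces $2nM$ on the right --- and this factor of $2$ is precisely your loss.) With the stronger bound, the residual is handled directly: the disjoint singletons $S_j$ are at most $n$ distinct goods, so $\sum_j v_i(S_j) \le \sum_{g \in G_i} v_i(g)$, and summing the \EFone inequality over $i$ gives $n \cdot \sw(\calB) \ge \Sigma - \sum_i \sum_{g \in G_i} v_i(g) \ge \Sigma - n \cdot \sw(\calB)$, i.e.\ $\sw(\calB) \ge \frac{1}{2n}\Sigma$. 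Two minor notes: your $\frac{1}{3n}$ bound would still suffice (with worse constants) for the uses of this lemma in \Cref{theorem:efone}, but it does not prove the lemma as stated; and your padding with zero-valued dummy goods when $m < n$ is a sound way to handle a case the paper leaves implicit.
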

\begin{proof}
For each agent $i \in [n]$, sort the goods as $g_{i_1},\ldots,g_{i_m}$ in a non-decreasing order of their value to agent $i$ (with ties broken arbitrarily), i.e., so that $v_i(g_{i_1}) \ge \ldots \ge v_i(g_{i_m})$, and let $G_i = \set{g_{i_1},\ldots,g_{i_n}}$ be the set of $n$ most valuable goods to agent $i$.

Consider a subgraph $G' = ([n] \cup [m], E')$ of the weighted bipartite graph $G$ constructed in \Cref{step:max-match} of \Cref{alg:efone-half}, in which we only retain the edges $E' = \set{(i,g) : i \in [n],g \in G_i}$ (i.e., from each agent $i$ to her $n$ most valuable goods). Because $G'$ is $n$-left-regular, it can be decomposed into $n$ disjoint left-perfect matchings $M_1,\ldots,M_n$; this is an easy consequence of Hall's theorem~\cite{Bol02}. Due to the pigeonhole principle, one of $M_1,\ldots,M_n$ must have weight at least $\frac{1}{n} \cdot \sum_{i \in [n], g \in G_i} v_i(g)$. Since this matching is also a left-perfect matching in $G$, and $\pi$ constructed in \Cref{step:max-match} is a maximum-weight left-perfect matching in $G$, we have $\sum_{i \in [n]} v_i(\pi(i)) = \sw(\calB') \ge \frac{1}{n} \cdot \sum_{i \in [n], g \in G_i} v_i(g)$. 

Note that $\calB'$ assigns a single good to each agent; hence, it is trivially $\EFone$. It is known that for any monotonic valuations (thus, in particular, for subadditive valuations), the algorithm of \citet{LMMS04} can extend a partial $\EFone$ allocation $\calB'$ into a complete $\EFone$ allocation $\calB$ such that $v_i(B_i) \ge v_i(B'_i)$. Thus, for the $\EFone$ allocation $\calB$ returned by the algorithm, we have 
\begin{align}\label{ineq:max-matching} 
\sw(\calB) \geq \frac{1}{n} \sum_{i \in [n]} \sum_{g \in G_i} v_i(g).
\end{align} 

Using the fact that $\calB$ is $\EFone$, we get that, for each $i,j \in [n]$, there exists $S_j \subseteq B_j$ with $|S_j| \le 1$ such that $v_i(B_i) \geq v_i(B_j \setminus S_j) \geq v_i(B_j) - v_i(S_j)$; here the last inequality follows from the subadditivity of $v_i$. Summing this inequality over $j \in [n]$, we get 
\begin{align*}
n \cdot v_i(B_i) \geq \sum_{j \in [n]} v_i(B_j) - \sum_{j \in [n]} v_i(S_j) \geq v_i([m]) -  \sum_{j \in [n]} v_i(S_j) \geq v_i([m]) - \sum_{g \in G_i} v_i(g),
\end{align*}
where the penultimate inequality holds because $B_j$s form a partition of $[m]$ and $v_i$ is subadditive, and the last inequality holds because the $n$ sets $S_j$s are disjoint and have cardinality at most $1$, and $G_i$ is the set of the $n$ most valuable goods to $i$. Summing over $i \in [n]$, we get 
\[
n \cdot \sum_{i \in [n]} v_i(B_i) = n \cdot \sw(\calB) \geq \sum_{i \in [n]} v_i([m]) - \sum_{i \in [n]} \sum_{g \in G_i} v_i(g).
\]
Substituting \Cref{ineq:max-matching}, we get the desired bound
\begin{align*}
n \cdot \sw(\calB) + n \sw(\calB) \ge \sum_{i \in [n]} v_i([m]) - \sum_{i \in [n]} \sum_{g \in G_i} v_i(g) + \sum_{i \in [n]} \sum_{g \in G_i} v_i(g) = \sum_{i \in [n]} v_i([m]),
\end{align*}
which implies $\sw(\calB) \geq \frac{1}{2n} \sum_{i \in [n]} v_i([m])$, as needed. 
\end{proof}

\subsection{The Case of High Optimal Welfare}\label{sec:ef1-high-opt}

As noted in \Cref{sec:ef1-absolute}, \Cref{lemma:efone-half} allows us to focus on fair division instances with scaled subadditive valuations in which the optimal social welfare is $\OPT = \Omega(\sqrt{n})$. This allows us to sacrifice $O(\sqrt{n})$ of the welfare in $\OPT$, obtain $O(\sqrt{n})$ approximation to the \emph{remaining} welfare through an $\EFone$ allocation, and yet achieve $O(\sqrt{n})$ approximation to $\OPT$. In particular, we present an algorithm that efficiently finds an $\EFone$ allocation $\calA$ with social welfare $\sw(\calA) \ge \frac{\OPT-2\sqrt{n}}{12\sqrt{n}}$. 

Ideally, our algorithm would like to use as reference an allocation $\calW^*$ with the optimal social welfare $\OPT$. However, for subadditive valuations, computing such an allocation is known to be NP-hard under both value queries and demand queries~\cite{DNS10}. Using a polynomial number of value queries, it is known that $\Theta(\sqrt{m})$ is the best possible approximation to the optimal social welfare~\cite{DNS10,mirrokni2008tight}. But this is too loose for our purpose. Hence, we instead turn to demand queries, for which an algorithm of \citet{feige2009maximizing} yields the optimal $2$-approximation. In particular, it efficiently computes an allocation $\calW$ with social welfare at least $\nicefrac{1}{2} \cdot \OPT$.\footnote{We note that this allocation only serves as a reference in our algorithm, and may not be $\EFone$ itself.} Outside of the black-box use of Feige's algorithm, the rest of our algorithm uses value queries (which are a special case of demand queries). We note that for the special case of submodular valuations, it is possible to efficiently compute an allocation with social welfare at least $\nicefrac{1}{2}\cdot \OPT$ using only value queries~\cite{lehmann2006combinatorial}; hence, for such special cases, our algorithm would not require access to demand queries. Even more importantly, we emphasize that our use of value or demand queries is only for the efficiency purpose; our main result --- the price of $\EFone$ --- is existential and independent of any query model. 

Starting with the high-welfare allocation $\calW$ returned by Feige's algorithm, our algorithm works as follows. It first indexes the $m$ goods as $g_1, g_2, \ldots, g_m$ such that the goods in each $W_i$ receive consecutive indices.\footnote{For example, we can index the goods such that $W_i = \set{g_k : 1 + \sum_{j=1}^{i-1} |W_{j}| \le k \le \sum_{j=1}^i |W_i|}$ for each agent $i \in [n]$.} Alternatively, consider a line graph $L = ([m],E)$ over the set of goods with edges $E = \set{(g_k,g_{k+1}) : k \in [m-1]}$. Then, each $W_i$ induces a connected subgraph of $L$. 

\begin{definition}
	Let $L = ([m],E)$ be a line graph over the goods. We say that $S \subseteq [m]$ is a \emph{connected bundle} in $L$ if $S$ induces a connected subgraph of $L$. Given a partial allocation $\calP$, define $\calU(\calP)$ as the set of connected components of $L$ that remain after removing the allocated goods $\cup_{i \in [n]} P_i$. We refer to $U \in \calU(\calP)$ as an \emph{unassigned connected bundle}. 
\end{definition}

\begin{figure}[h]
	\begin{center}
		\includegraphics[scale=.7]{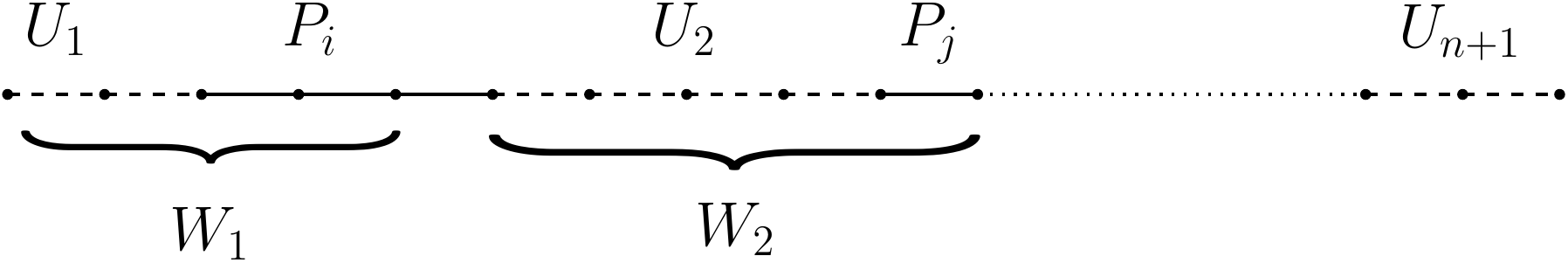}
	\end{center}
	\caption{Line graph over the goods and unassigned components.} 
	\label{figure:line-graph} 
\end{figure}

In this terminology, each $W_i$ is a connected bundle in $L$ (see \Cref{figure:line-graph}). \algefone builds a partial allocation $\calP$ by giving each agent $i$ her most valuable good from $W_i$. This allocation trivially satisfies two properties: it is $\EFone$, and each $P_i$ is a connected bundle in $L$. The algorithm then iteratively updates $\calP$ to improve its social welfare while maintaining both these properties. This iterative process is inspired by a similar algorithm for (divisible) cake-cutting by~\citet[Algorithm 1]{arunachaleswaran2019fair}.

In particular, at every iteration, our algorithm computes the set of unassigned connected bundles $\calU(\calP)$; note that removing $n$ connected bundles from $\calP$ can create at most $n+1$ unassigned connected bundles, as shown in \Cref{figure:line-graph}. If there is an unassigned connected bundle $U \in \calU(\calP)$ that some agent envies, then the algorithm finds an \emph{inclusion-wise minimal} subset of $U$ that is envied and allocates it to an envious agent. While this preserves exact envy-freeness in the cake-cutting setting of \citet{arunachaleswaran2019fair}, in our setting we argue that inclusion-wise minimality preserves the $\EFone$ property of $\calP$. We also argue that this iterative process terminates at a partial allocation $\calP$ that satisfies the desired social welfare guarantee. Finally, we use the algorithm of \citet{LMMS04} to extend this partial $\EFone$ allocation into a complete $\EFone$ allocation without losing social welfare (while the algorithm of Lipton et al. may not preserve the partial allocation to each agent, it does ensure that the utility of each agent from her final allocation is at least that from her partial allocation; see, e.g.,~\cite{AmanatidisMN20}). The detailed algorithm is presented as \Cref{alg:efone}. 

\begin{algorithm}[!htb]
  \caption{\algefone} \label{alg:efone}  
  \textbf{Input:} A fair division instance $\calI= \langle [n], [m], \set{v_i}_{i \in [n]} \rangle$ with demand-query oracle access to the subadditive valuations $v_i$s.
  
  \textbf{Output:} An $\EFone$ allocation $\calA$ with social welfare $\sw(\calA) \ge \frac{\OPT}{12\sqrt{n}} - \frac{1}{6}$.
  
  \begin{algorithmic}[1]
		\STATE Use Feige's algorithm~\cite{feige2009maximizing} to compute an allocation $\calW$ with social welfare $\sw(\calW) \ge \nicefrac{1}{2} \cdot \OPT$ \label{step:feige}
		
		\STATE Re-index the $m$ goods as $g_1, g_2, \ldots, g_m$ such that in the line graph $L$ over the goods containing edges $(g_k,g_{k+1})$ for $k \in [m-1]$, each $W_i$ forms a connected subgraph \label{step:define-L}
		
		\STATE Initialize $t = 0$
		
		\STATE For each agent $i \in [n]$ with $W_i \neq \emptyset$, pick $g^*_i \in \argmax{g \in W_i} v_i(g)$ and set $P^t_i = \set{g^*_i}$ \label{step:initialize}
		
		\STATE For each agent $i \in [n]$ with $W_i = \emptyset$, set $P^t_i = \emptyset$
		
		\COMMENT{For a partial allocation $\calP^t$, let $\calU(\calP^t)$ denote the collection of connected components in the line graph $L$ that remain after the removal of $\cup_{i \in [n]} P^t_i$}
		
		\WHILE{$\exists$ agent $i \in [n]$ and connected component $U \in \calU(\calP^t)$ with $v_i(P^t_i) < v_i(U)$}
		
			\STATE Let $U$ consist of goods $\set{g_a, g_{a+1}, \ldots, g_b}$
			
			\STATE Let $c \in [a,b]$ be the smallest index such that $v_k(P^t_k) < v_k(\set{g_a, g_{a+1}, \ldots, g_c})$ for some $k \in [n]$ \label{step:move-knife}
			
			\COMMENT{The choice of $c$ ensures that no agent $k$ values $\set{g_a,\ldots,g_{c-1}}$ more than her current bundle, i.e., $v_k(P^t_k) \ge v_k(\set{g_a,\ldots,g_{c-1}})$ for every agent $k$.}
			
			\STATE Pick an arbitrary agent $k$ for which $v_k(P^t_k) < v_k(\set{g_a, g_{a+1}, \ldots, g_c})$
			
			\STATE Set $P^{t+1}_k = \set{g_a, \ldots, g_c}$ and $P^{t+1}_j = P^t_j$ for all $j \neq k$ \label{step:update}
			
			\STATE Update $t \leftarrow t +1$
		
		\ENDWHILE
		
		\COMMENT{At this point, $v_i(P^t_i) \geq v_i(U)$ for all $i \in [n]$ and $U \in \calU(\calP^t)$. We will show that $\calP^t$ is a partial $\EFone$ allocation.}
		
		\STATE Use the algorithm of \citet{LMMS04} to extend the partial $\EFone$ allocation $\calP^t$ into a complete $\EFone$ allocation $\calA$ such that $v_i(A_i) \ge v_i(P^t_i)$ for each $i \in [n]$ \label{step:Lipton}
        
        \RETURN Allocation $\calA$
\end{algorithmic}
\end{algorithm}

Let us start by proving some relevant properties of \algefone. 

\begin{lemma}
\label{lemma:algefone-invariant}
When \algefone is run on a fair division instance $\calI= \langle [n], [m], \set{v_i}_{i \in [n]}  \rangle$ with subadditive valuations, the following hold regarding the partial allocation $\calP^t$ constructed after $t$ iterations of the while loop.
\begin{enumerate}
	\item If $t \ge 1$, then for each agent $i \in [n]$, either $P^t_i = P^{t-1}_i$ or $v_i(P^t_i) > v_i(P^{t-1}_i)$. 
	\item For each agent $i \in [n]$, $P_i$ is a connected bundle under the line graph $L$ constructed in \Cref{step:define-L}. 
	\item $\calP^t$ is $\EFone$. 
\end{enumerate}
\end{lemma}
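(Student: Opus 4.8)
The plan is to establish all three invariants together by induction on the iteration count $t$, treating the initialization (Lines \ref{step:initialize}--\ref{step:update} at $t=0$) as the base case and one execution of the while loop as the inductive step. The single organizing observation is that passing from $\calP^t$ to $\calP^{t+1}$ changes the bundle of only the chosen agent $k$, to $P^{t+1}_k = \set{g_a,\ldots,g_c}$, while every other bundle is copied verbatim. This reduces each step to comparing $k$'s new bundle against the unchanged bundles of the others. Part 1 is then immediate from \Cref{step:update}: for $j \neq k$ we have $P^t_j = P^{t-1}_j$, and for the chosen agent the selection criterion forces $v_k(P^{t-1}_k) < v_k(\set{g_a,\ldots,g_c}) = v_k(P^t_k)$, so $v_k(P^t_k) > v_k(P^{t-1}_k)$.

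For Part 2 I would maintain the stronger statement that every bundle is a contiguous interval of goods under the indexing fixed in \Cref{step:define-L}. At $t=0$ each $P^0_i$ is empty or a single good, hence trivially a connected bundle. For the inductive step, an unassigned component $U \in \calU(\calP^t)$ is a connected component of the path $L$ and so is a contiguous block $\set{g_a,\ldots,g_b}$; its prefix $\set{g_a,\ldots,g_c}$ with $a \le c \le b$ is again contiguous, so $P^{t+1}_k$ is connected, and all other bundles are unchanged.

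Part 3 is the crux. The base case holds because any partial allocation assigning at most one good per agent is trivially $\EFone$. For the inductive step, assuming $\calP^t$ is $\EFone$, I would verify the $\EFone$ inequality for each ordered pair $(i,j)$ in $\calP^{t+1}$, splitting on whether $i$ or $j$ equals $k$. If neither equals $k$, both bundles are unchanged and $\EFone$ carries over directly. If $i = k$, Part 1 says $k$'s value only rose, so any up-to-one-good envy she had toward the unchanged bundle $P^{t+1}_j = P^t_j$ remains bounded. The delicate case is $i \neq k$, $j = k$, where $i$ now faces $k$'s enlarged bundle $\set{g_a,\ldots,g_c}$. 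Here I would invoke the minimality of $c$ recorded in the comment after \Cref{step:move-knife}: since $c$ is the \emph{smallest} index for which some agent exceeds her current value on the prefix, no agent — in particular $i$ — values $\set{g_a,\ldots,g_{c-1}}$ more than her own bundle, giving $v_i(P^{t+1}_i) = v_i(P^t_i) \ge v_i(\set{g_a,\ldots,g_{c-1}}) = v_i(P^{t+1}_k \setminus \set{g_c})$. Thus removing the single good $g_c$ eliminates $i$'s envy toward $k$, which is exactly $\EFone$; the boundary case $c = a$ makes $\set{g_a,\ldots,g_{c-1}} = \emptyset$ and holds by nonnegativity.

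The main obstacle is precisely this last case of Part 3: showing that enlarging one agent's connected bundle introduces no inadmissible envy from the onlookers. The entire purpose of selecting the minimal prefix index in \Cref{step:move-knife} is to guarantee that dropping the last-added good $g_c$ simultaneously restores the envy-free-up-to-one-good inequality for every other agent. I would take care that the argument rests on this minimality (rather than on subadditivity, which is not needed here) and that it correctly handles the degenerate empty prefix when $c = a$.
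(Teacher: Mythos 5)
Your proposal is correct and follows essentially the same route as the paper's proof: parts 1 and 2 are immediate from the update rule in \Cref{step:update}, and part 3 is an induction on $t$ whose key step uses the minimality of $c$ in \Cref{step:move-knife} to show that any envy toward agent $k$'s new bundle $\set{g_a,\ldots,g_c}$ is eliminated by removing $g_c$. Your explicit case analysis over ordered pairs $(i,j)$ and your remark that subadditivity is not needed here are just slightly more detailed renderings of the same argument.
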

\begin{proof}
	The first property trivially holds because \Cref{step:update} updates the bundle of exactly one agent $k$ in each iteration, and $k$ is chosen such that she values her new bundle strictly more than her previously assigned bundle. The second property also trivially holds because \Cref{step:update} always assigns a connected bundle to an agent. 
		
	For the third property, we use induction on $t$. For the base case of $t=0$, note that $\calP^0$ is trivially $\EFone$ because $|P^0_i|\le 1$ for each $i \in [n]$. Suppose $\calP^t$ is $\EFone$ for some $t \in \bbN$. To see why $\calP^{t+1}$ is $\EFone$, suppose agent $k$ is assigned $\set{g_a,\ldots,g_c}$ in iteration $t+1$. Because $v_k(\set{g_a,\ldots,g_c}) > v_k(P^t_k)$ by design, the only envy that may exist under $\calP^{t+1}$ is from an agent $i \neq k$ toward agent $k$. However, due to the minimality of $c$ chosen in \Cref{step:move-knife}, no agent $i$ can envy the bundle $\set{g_a,\ldots,g_{c-1}}$. Hence, agent $i$ may only envy agent $k$ by up to one good ($g_c$ in particular) under $\calP^{t+1}$. Hence, $\calP^{t+1}$ is $\EFone$. 
\end{proof}

\begin{lemma}\label{lemma:algefone-termination}
	When $\algefone$ is run on a fair division instance $\calI= \langle [n], [m], \set{v_i}_{i \in [n]}  \rangle$ with subadditive valuations, the following hold.
	\begin{enumerate}
		\item The while loop terminates after $T = O(n m^2)$ iterations.
		\item The partial allocation $\calP^T$ constructed at the end of the while loop satisfies $v_i(P^T_i) \ge v_i(U)$ for every agent $i \in [n]$ and unassigned connected bundle $U \in \calU(\calP^T)$. 
	\end{enumerate}
\end{lemma}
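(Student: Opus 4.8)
The plan is to dispatch the second claim immediately and then spend the real effort on the iteration bound. For the second claim, observe that the \textsc{while} loop of \algefone continues exactly as long as there exist an agent $i \in [n]$ and a connected component $U \in \calU(\calP^t)$ with $v_i(P^t_i) < v_i(U)$. Hence, at the iteration $T$ where the loop exits, this condition fails for every such pair, which is precisely the statement $v_i(P^T_i) \ge v_i(U)$ for all $i \in [n]$ and all $U \in \calU(\calP^T)$. So the second part requires no work beyond reading off the negation of the loop guard.

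For the first part, I would bound the number of updates suffered by each agent separately and then sum. The engine is the monotonicity already established in \Cref{lemma:algefone-invariant}: after each iteration, the bundle of exactly one agent $k$ changes, and for that agent the value strictly increases, $v_k(P^{t+1}_k) > v_k(P^t_k)$, while all other agents' values stay fixed. Therefore the total number $T$ of iterations equals $\sum_{i \in [n]} (\text{number of iterations in which } P_i \text{ is updated})$, and each update of agent $i$ corresponds to a \emph{strict} increase in the sequence $v_i(P^0_i), v_i(P^1_i), \ldots$

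To bound the number of strict increases for a fixed agent $i$, I would use the second property of \Cref{lemma:algefone-invariant}: every bundle $P^t_i$ is a connected bundle in the line graph $L$, i.e., an interval $\set{g_a, \ldots, g_b}$ of consecutively indexed goods. The number of such intervals (including the singletons and the empty set) is $\binom{m}{2} + m + 1 = O(m^2)$. Consequently the values $v_i(P^t_i)$ all lie in the finite set $\set{v_i(I) : I \text{ an interval in } L}$, which has size $O(m^2)$. A strictly increasing sequence drawn from a set of this size has at most $O(m^2)$ terms, so agent $i$'s bundle is updated at most $O(m^2)$ times. Summing over the $n$ agents yields $T = O(n m^2)$, as claimed.

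The only subtlety I expect to flag is in the counting step: one must bound the number of updates by the number of \emph{distinct intervals} rather than by the number of distinct values, since a priori two different intervals could share the same value and a naive ``distinct value'' count would not directly control the length of the increasing run. Reducing to intervals via the connectedness invariant, together with the strictness of each increase, is what makes the $O(m^2)$-per-agent bound go through cleanly; everything else is routine bookkeeping.
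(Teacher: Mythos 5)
Your proof is correct and takes essentially the same approach as the paper: part 2 is read off directly from the negation of the while-loop guard, and part 1 bounds each agent's number of updates by $O(m^2)$ using the strict value increase from \Cref{lemma:algefone-invariant} together with the fact that every assigned bundle is one of the $O(m^2)$ connected intervals of the line graph $L$, giving $T = O(nm^2)$ overall. One remark: the ``subtlety'' you flag at the end is a non-issue --- a strictly increasing sequence whose terms all lie in a set of at most $O(m^2)$ values has at most $O(m^2)$ terms, so counting distinct values works just as well as counting distinct intervals (coinciding values only shrink the value set and hence tighten, not break, the bound).
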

\begin{proof}
	Suppose \Cref{step:update}, in iteration $t$ of the while loop, updates the allocation to agent $k$ by assigning her the bundle $\set{g_a,\ldots,g_c}$. Then, by \Cref{lemma:algefone-invariant},  $\set{g_a,\ldots,g_c}$ must be a connected bundle in the line graph $L$ defined in \Cref{step:define-L} and agent $k$ must strictly prefer this bundle to her previously assigned bundle. 
	
	Because there are $O(m^2)$ possible connected bundles in $L$, and because an agent's value for her assigned bundle strictly increases every time her assignment is updated, an agent's assignment can only be updated $O(m^2)$ times. Because each iteration of the while loop updates the assignment of one agent, there can be at most $O(n m^2)$ iterations of the while loop. 
	
	The second statement follows directly from the condition of the while loop; the while loop would not have terminated if $v_i(P^T_i) < v_i(U)$ for any agent $i \in [n]$ and any unassigned connected bundle $U \in \calU(\calP^T)$. 	
\end{proof}

Using \Cref{lemma:algefone-invariant,lemma:algefone-termination}, we can derive the key technical result of this section. 

\begin{lemma}\label{lemma:apx-guarantee}
Given a fair division instance $\calI= \langle [n], [m], \set{v_i}_{i \in [n]} \rangle$ with scaled subadditive valuations, \algefone terminates in polynomial time and returns an $\EFone$ allocation $\calA$ satisfying 
\begin{align*}
\sw(\calA) \geq \frac{\OPT}{12\sqrt{n}} - \frac{1}{6}, 
\end{align*}
where $\OPT$ is the optimal social welfare achievable in instance $\calI$. 
\end{lemma}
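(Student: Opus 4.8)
The plan is to dispose of the easy claims first and then concentrate on the welfare bound. For correctness and efficiency, most of the work is already done: \Cref{lemma:algefone-invariant} gives that $\calP^T$ is $\EFone$, and \Cref{lemma:algefone-termination} bounds the while loop at $O(nm^2)$ iterations, each using polynomially many queries. Feige's algorithm in \Cref{step:feige} is polynomial, and the extension of \citet{LMMS04} in \Cref{step:Lipton} is polynomial, preserves $\EFone$, and guarantees $v_i(A_i)\ge v_i(P^T_i)$ for every $i$. Hence $\calA$ is an $\EFone$ allocation computed in polynomial time with $\sw(\calA)\ge \sw(\calP^T)$, so it suffices to prove $\sw(\calP^T)\ge \OPT/(12\sqrt n)-1/6$. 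Since $\sw(\calW)\ge \OPT/2$, I will in fact establish the equivalent statement $\sw(\calW)\le 6\sqrt n\cdot \sw(\calP^T)+\sqrt n$.

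For the structural setup, write $x_i = v_i(P^T_i)$ and $M_i = \max_{g\in W_i} v_i(g)$. By \Cref{step:initialize} together with the monotone growth in \Cref{lemma:algefone-invariant}(1) we have $x_i\ge M_i$, and by \Cref{lemma:algefone-termination}(2) we have $x_i\ge v_i(U)$ for every unassigned bundle $U\in\calU(\calP^T)$. On the line $L$ of \Cref{step:define-L}, both the reference bundles $\{W_i\}$ and the final pieces (the assigned $P^T_j$ together with the unassigned $U\in\calU(\calP^T)$) are intervals; there are at most $n$ of the former and at most $2n+1$ of the latter, so their common refinement has at most $3n$ cells. Equivalently, writing $d_i$ for the number of final pieces meeting $W_i$, we have $\sum_i d_i\le 3n$.

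The heart of the argument is a per-piece bound: for every agent $i$ and every final piece $Q$ meeting $W_i$, I claim $v_i(W_i\cap Q)\le 2x_i$. If $Q=U$ is unassigned this is immediate, since $v_i(W_i\cap U)\le v_i(U)\le x_i$. If $Q=P^T_j$ is assigned, let $g^\dagger=\argmax{g\in P^T_j} v_i(g)$; then $\EFone$ of $\calP^T$ gives $v_i(P^T_j\setminus\{g^\dagger\})\le v_i(P^T_i)=x_i$. I then split on whether $g^\dagger\in W_i$: if it is, then $v_i(g^\dagger)\le M_i\le x_i$, and by monotonicity and subadditivity $v_i(W_i\cap Q)\le v_i(P^T_j)\le v_i(P^T_j\setminus\{g^\dagger\})+v_i(g^\dagger)\le 2x_i$; if it is not, then $W_i\cap P^T_j\subseteq P^T_j\setminus\{g^\dagger\}$, so monotonicity alone yields $v_i(W_i\cap Q)\le x_i$. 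This dichotomy, which converts $\EFone$ plus the ``favorite good of $P^T_j$ lies in $W_i$ or not'' case split into a single uniform bound that also covers the at most two boundary pieces of the interval $W_i$, is the step I expect to be the main obstacle. Summing over the pieces meeting $W_i$ and applying subadditivity along the interval $W_i$ then gives $v_i(W_i)\le 2x_i d_i$.

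Finally I amplify from the $O(n)$ incidence count to $O(\sqrt n)$ via a threshold $\tau=1/(6\sqrt n)$. For a ``rich'' agent with $x_i\ge\tau$, scaling gives $v_i(W_i)\le v_i([m])=1\le 6\sqrt n\,x_i$, so rich agents contribute at most $6\sqrt n\cdot\sw(\calP^T)$ in total. For a ``poor'' agent with $x_i<\tau$, the per-piece bound gives $v_i(W_i)\le 2x_i d_i<(1/(3\sqrt n))\,d_i$, so using $\sum_i d_i\le 3n$ the poor agents contribute at most $(1/(3\sqrt n))\cdot 3n=\sqrt n$. Adding the two cases yields $\sw(\calW)\le 6\sqrt n\,\sw(\calP^T)+\sqrt n$, and combining with $\sw(\calW)\ge\OPT/2$ gives exactly $\sw(\calP^T)\ge \OPT/(12\sqrt n)-1/6$. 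The precise constants are pinned down by the choice of threshold $1/(6\sqrt n)$, the factor $2$ in the per-piece bound, and the $3n$ cell count; any threshold of the form $\Theta(1/\sqrt n)$ would still give an $O(\sqrt n)$ price with an additive $O(1)$ loss.
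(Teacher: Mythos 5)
Your proof has one genuine flaw, and it sits exactly in the step you flagged as the main obstacle: the claim that \EFone of $\calP^T$ gives $v_i(P^T_j\setminus\set{g^\dagger})\le x_i$ for $g^\dagger\in\argmax{g\in P^T_j} v_i(g)$. For subadditive (non-additive) valuations this is false: \EFone only promises \emph{some} witness good $g^*\in P^T_j$ with $v_i(P^T_j\setminus\set{g^*})\le v_i(P^T_i)$, and removing the individually most valuable good need not minimize the value of the remainder. Concretely, let $P^T_j=\set{a,b,c}$ with $v_i(a)=1$, $v_i(b)=v_i(c)=0.9$, $v_i(\set{a,b})=v_i(\set{a,c})=1.1$, $v_i(\set{b,c})=1.8$, $v_i(\set{a,b,c})=1.9$; this valuation is monotone and subadditive, and if $v_i(P^T_i)=1.1$ then agent $i$ does not envy $P^T_j$ up to one good (remove $b$), yet $v_i(P^T_j\setminus\set{g^\dagger})=v_i(\set{b,c})=1.8>1.1=x_i$. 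The repair, however, is local: run your case split on the \EFone witness $g^*$ instead of on $g^\dagger$. If $g^*\in W_i$, then $v_i(g^*)\le M_i\le x_i$, and subadditivity plus monotonicity give $v_i(W_i\cap P^T_j)\le v_i(P^T_j)\le v_i(P^T_j\setminus\set{g^*})+v_i(g^*)\le 2x_i$; if $g^*\notin W_i$, then $W_i\cap P^T_j\subseteq P^T_j\setminus\set{g^*}$, so monotonicity alone gives $v_i(W_i\cap P^T_j)\le x_i$. This is essentially what the paper's proof does, in a slightly different order: it first restricts to $\hat{P}_j=P_j\cap W_i$, observes that \EFone toward $P_j$ plus monotonicity yields a witness $\hat{g}\in\hat{P}_j\subseteq W_i$, and then uses $v_i(\hat{g})\le v_i(P^0_i)\le v_i(P_i)$ together with subadditivity.

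With that patch your argument is correct, and it follows the paper's route: the same reduction via Feige's allocation and the extension of Lipton et al., the same per-piece bound of $2x_i$, and the same incidence count along the line (your refinement bound $\sum_i d_i\le 3n$ versus the paper's marker-sweep bound $\sum_i q_i\le 3n+1$ are interchangeable). The only real deviation is the final amplification: you threshold on the \emph{value} $x_i$ (rich if $x_i\ge 1/(6\sqrt{n})$, so that scaling gives $v_i(W_i)\le 1\le 6\sqrt{n}\,x_i$), whereas the paper thresholds on the \emph{incidence count} $q_i$ (at most $\sqrt{n}$ agents have $q_i>3\sqrt{n}$, and each contributes at most $1$ by scaling). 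These are dual ways of bounding $\sum_i \min\set{2x_i q_i,\,1}$, they yield identical constants, and neither is more general than the other.
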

\begin{proof}
Recall that allocation $\calW$ returned by Feige's algorithm in \Cref{step:feige} satisfies $\sw(\calW) \ge \frac{1}{2} \OPT$. Let the while loop in \algefone terminate with partial allocation $\calP$; that is, $\calP = \calP^T$, where $T$ is the number of iterations of the while loop. \Cref{lemma:algefone-termination} ensures that the while loop terminates, and therefore, $\calP$ is well-defined. We want to show that 
\begin{align}\label{ineq:partial-alloc}
\sqrt{n} + 6 \sqrt{n} \cdot \sw(\calP) \ge \sw(\calW) 
\end{align}

By \Cref{lemma:algefone-invariant}, $\calP$ is $\EFone$. Also, note that the algorithm of \citet{LMMS04} used in \Cref{step:Lipton} can extend a partial $\EFone$ allocation $\calP$ into a complete $\EFone$ allocation $\calA$ with $\sw(\calA) \ge \sw(\calP)$. Hence, \Cref{ineq:partial-alloc}, along with the fact that $\sw(\calW) \ge \frac{1}{2} \OPT$, completes the proof of the stated claim.  
 
Let us now prove that \Cref{ineq:partial-alloc} holds. First, due to \Cref{lemma:algefone-invariant}, each $P_i$ is a connected bundle in the line graph $L$. Hence, removing the goods allocated in $\calP$ creates at most $n+1$ connected components, i.e., $|\calU(\calP)| \le n+1$. Let $\calT(\calP) = \set{P_j : j \in [n]} \cup \calU(\calP)$ be the collection of assigned and unassigned connected bundles in $L$ created by $\calP$. Hence, $|\calT(\calP)| \le 2n+1$. For each agent $i \in [n]$, let $\calQ_i$ denote the bundles in $\calT(\calP)$ that intersect with $W_i$, i.e., $\calQ_i \coloneqq \set{ S \in \calT(\calP) : S \cap W_i \neq \emptyset}$. Write $q_i \coloneqq |\calQ_i|$. For instance, in \Cref{figure:line-graph}, $\calQ_1 = \set{U_1, P_i}$ and $q_1 = 2$.

Note that $\calT(\calP)$ forms a partition of $L$ into at most $2n+1$ connected components, while $\calW$ forms a partition of $L$ into at most $n$ connected components. This, along with the observation that $L$ is a line graph, can be used to bound the total number of intersections between the two. Specifically, we can move a marker from left to right, and keep track of the bundle from $\calW$ and the bundle from $\calT(\calP)$ that contain the current good (and thus intersect). Every time the marker encounters a \emph{new} intersection, it must have entered either a new bundle from $\calW$ or a new bundle from $\calT(\calP)$. Thus, we have
\begin{align}\label{ineq:intersection-count}
\textstyle\sum_{i \in [n]} q_i \leq (2n+1)+n = 3n+1.
\end{align}

Next, we partition the set of agents $[n]$ based on their $q_i$ value. Let $H \coloneqq \set{i \in [n] : q_i > 3 \sqrt{n}}$ and $H^c \coloneqq \set{i \in [n] : q_i \leq 3 \sqrt{n}}$. \Cref{ineq:intersection-count} immediately implies that $|H| \leq \sqrt{n}$. This, along with the observation that the valuations are scaled, gives us 
\begin{align}\label{ineq:H-welfare}
\textstyle\sum_{i \in H} v_i(W_i) \leq |H| \leq \sqrt{n}. 
\end{align}

Next, we bound $v_i(W_i)$ for $i \in H^c$. By the definition of $\calQ_i$, we have $v_i(W_i) = \sum_{S \in \calQ_i} v_i(S \cap W_i)$. We break the summation into two parts depending on whether $S \in \calU(\calP)$ or $S \in \calP$. 
\begin{enumerate}
	\item If $S \in \calU(\calP)$, note that from \Cref{lemma:algefone-termination}, we have $v_i(S \cap W_i) \le v_i(S) \le v_i(P_i) \le 2v_i(P_i)$. 
	\item If $S \in \calP$, then $S = P_j$ for some agent $j \in [n]$. Let us write $\hat{P}_j = P_j \cap W_i$. By the definition of $\calQ_i$, $\hat{P}_j \neq \emptyset$. Because $\calP$ is $\EFone$ (\Cref{lemma:algefone-invariant}) and $v_i$ is monotonic, agent $i$ does not envy the bundle $P_j$ (and thus the bundle $\hat{P}_j$) up to one good in it. Thus, there exists a good $\hat{g} \in \hat{P}_j$ such that $v_i(P_i) \geq v_i(\hat{P}_j \setminus \set{\hat{g}}) \geq v_i(\hat{P}_j) - v_i(\hat{g})$, where the last transition uses subadditivity of $v_i$. Further, $\hat{g} \in \hat{P}_j \subseteq W_i$. The initialization in \Cref{step:initialize} and \Cref{lemma:algefone-invariant} ensure that $v_i(\hat{g}) \le v_i(P^0_i) \le v_i(P_i)$. Combining with the previous step, we have $v_i(S \cap W_i) = v_i(\hat{P}_j) \le 2 v_i(P_i)$.
\end{enumerate}
Thus, for $i \in H^c$, we have $v_i(W_i) \le |\calQ_i| \cdot 2 v_i(P_i) \le 3\sqrt{n} \cdot 2 v_i(P_i)$. Thus, with \Cref{ineq:H-welfare}, we get
\[
\textstyle\sw(\calW) \le \sqrt{n} + 6\sqrt{n} \cdot \sum_{i \in H^c} v_i(P_i) \le \sqrt{n} + 6\sqrt{n} \cdot \sw(\calP),
\]
which is the desired \Cref{ineq:partial-alloc}, and the proof is complete.
\end{proof}

\subsection{Proof of Theorem~\ref{theorem:efone}}\label{section:proof-of-efone}
We are now ready to restate and prove \Cref{theorem:efone}. The upper bounds are established using \Cref{lemma:efone-half,lemma:apx-guarantee}. As mentioned earlier, the lower bound for scaled valuations is already obtained by \citet{BeiLMS19}. 

\TheoremEFone*
\begin{proof}
Let us first prove the upper bounds. For unscaled valuations, \Cref{lemma:efone-half} shows that there exists an $\EFone$ allocation $\calB$ with $\sw(\calB) \ge \frac{1}{2n} \sum_{i \in [n]} v_i([m]) \ge \frac{1}{2n} \OPT$, which yields the desired $O(n)$ bound. 

For an instance $\calI$ with scaled valuations, we consider two cases: either  $\OPT \leq 8 \sqrt{n}$ or $\OPT > 8 \sqrt{n}$. 

\paragraph{Case 1:} Suppose $\OPT \le 8 \sqrt{n}$. Because valuations are scaled (i.e. $v_i([m]) = 1$ for each agent $i \in [n]$), \Cref{lemma:efone-half} implies that the $\EFone$ allocation $\calB$ returned by \algefoneabs satisfies $\sw(\calB) \ge \frac{1}{2n} \sum_{i \in [n]} v_i([m])  = \frac{1}{2}$. Hence, $\sw(\calB) \ge \frac{1}{16\sqrt{n}} \OPT$.

\paragraph{Case 2:} Suppose $\OPT > 8 \sqrt{n}$. Then, $\frac{1}{6} < \frac{\OPT}{48\sqrt{n}}$. Now, \Cref{lemma:apx-guarantee} implies that the $\EFone$ allocation $\calA$ returned by \algefone satisfies 
\[
\sw(\calA) \ge \frac{\OPT}{12\sqrt{n}} - \frac{1}{6} \ge \frac{\OPT}{12\sqrt{n}} - \frac{\OPT}{48\sqrt{n}} = \frac{\OPT}{16\sqrt{n}}.
\]

Thus, in either case, there exists an $\EFone$ allocation with social welfare at least $\frac{1}{16\sqrt{n}} \OPT$, which yields the desired $O(\sqrt{n})$ bound. To obtain the required $\EFone$ allocation, we simply run both algorithms and take the allocation with higher social welfare.
	
We now prove the lower bounds. For scaled valuations, the desired $\Omega(\sqrt{n})$ bound is proved by \citet{BeiLMS19}. For unscaled valuations, consider an instance with $n$ agents and $m=n$ goods, and the following additive valuations: the first agent has value $n$ for each good, while the other agents have value $1/n$ for each good. The optimal social welfare is achieved by allocating all goods to the first agents, yielding $\OPT = n^2$. In contrast, that any $\EFone$ allocation gives exactly one good to each agent, thus achieving welfare less than $n+1$. This implies the desired $\Omega(n)$ bound.  
\end{proof}

Before concluding this section, we recall that for additive valuations, EF1 logically implies proportionality up to one good (Prop1). Hence, our upper bounds of $O(\sqrt{n})$ and $O(n)$ on the price of EF1 under scaled and unscaled additive valuations, respectively, carry over to the price of Prop1 as well. These bounds are tight: Our construction from the proof of \Cref{theorem:efone} for unscaled valuations and the construction due to \citet{BeiLMS19} for scaled valuations can be modified slightly for Prop1 as well. For completeness, we provide these modified constructions in the appendix. 

\begin{corollary}\label{cor:prop1}
	The price of Prop1 is $\Theta(\sqrt{n})$ for scaled additive valuations and $\Theta(n)$ for unscaled additive valuations.
\end{corollary}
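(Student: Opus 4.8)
The plan is to establish \Cref{cor:prop1} by leveraging the logical implication that, for additive valuations, \EFone implies Prop1. Since \EFone is a stronger requirement than Prop1, any \EFone allocation is also a Prop1 allocation, so the maximum social welfare achievable under Prop1 is at least that achievable under \EFone. Consequently the price of Prop1 is at most the price of \EFone, and the upper bounds from \Cref{theorem:efone} transfer directly: $O(\sqrt{n})$ for scaled additive valuations and $O(n)$ for unscaled additive valuations. This handles the upper-bound halves of both claimed $\Theta$ bounds essentially for free.

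The substantive work lies in the matching lower bounds, since a lower bound for the stronger notion \EFone does not automatically yield a lower bound for the weaker notion Prop1. For the unscaled case, I would take the construction from the proof of \Cref{theorem:efone} (the first agent values each of the $n$ goods at $n$, the rest value each good at $1/n$) and verify that it still forces an $\Omega(n)$ gap under Prop1. The key point is that $v_i([m])/n = 1$ for every agent, so in any Prop1 allocation each of the low-value agents must receive a bundle whose value, together with one extra good, is at least $1$; since each good is worth only $1/n$ to them, this forces roughly $n$ goods per agent, leaving few goods for the high-value first agent. One must check carefully that Prop1 still prevents the first agent from monopolizing the goods and hence caps the Prop1-optimal welfare at $O(n)$ against $\OPT = n^2$, possibly after a mild modification of the valuations.

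For the scaled case, I would adapt the $\Omega(\sqrt{n})$ lower-bound instance of \citet{BeiLMS19} so that the same welfare loss persists when only Prop1 (rather than \EFone) is imposed. The danger is that relaxing to Prop1 might admit a high-welfare allocation that \EFone would have forbidden; so the modification must ensure that even the weaker proportionality-up-to-one-good constraint still obstructs the welfare-maximizing allocation by the full $\sqrt{n}$ factor. Concretely, I expect to engineer an instance with one (or a few) agents who have concentrated value on a small set of goods contributing almost all of $\OPT$, while the remaining agents' Prop1 thresholds $v_i([m])/n = 1/n$ force those valuable goods to be dispersed, destroying $\Omega(\sqrt{n})$ of the optimum.

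The main obstacle I anticipate is precisely this lower-bound verification for the scaled Prop1 case: one must design the modified instance so that the proportionality threshold alone, without the full bite of envy-freeness up to one good, still enforces the $\Omega(\sqrt{n})$ welfare loss, and then argue that \emph{no} Prop1 allocation can do better. Because the actual instances are deferred to the appendix, in the body I would simply state that the \EFone constructions can be ``modified slightly,'' assert that the upper bounds carry over by the \EFone-implies-Prop1 implication, and relegate the detailed Prop1 lower-bound instances and their welfare computations to the appendix, which is exactly what the paragraph preceding the corollary announces.
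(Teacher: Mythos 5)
Your upper-bound argument is exactly the paper's: for additive valuations \EFone implies Prop1, so the welfare-approximating \EFone allocations behind \Cref{theorem:efone} are also Prop1 allocations, and the $O(\sqrt{n})$ and $O(n)$ bounds transfer for free. That half is correct.

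The gap is in the lower bounds, which you rightly identify as the substantive work but then sketch incorrectly. In the unscaled \EFone instance, a low agent has $v_i([m]) = n \cdot (1/n) = 1$, so her Prop1 threshold is $v_i([m])/n = 1/n$ (and agent $1$'s is $n$), not $1$ as you state for ``every agent.'' Since each good is worth exactly $1/n$ to a low agent, an \emph{empty} bundle already satisfies Prop1: $v_i(\emptyset \cup \set{g}) = 1/n \ge 1/n$. Hence the allocation giving all $n$ goods to agent $1$ is itself Prop1, the unmodified instance has Prop1 price $1$, and your claim that it ``still forces an $\Omega(n)$ gap'' fails; Prop1 does not force the low agents to receive anything, let alone ``roughly $n$ goods per agent'' (there are only $n$ goods in total). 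The same degeneracy kills the unmodified scaled instance of \citet{BeiLMS19}: there the low agents value every good at exactly their threshold $1/n$, so the welfare-optimal allocation is already Prop1. The single missing idea---which is the entire content of the paper's ``slight modification''---is to use $m = n+1$ goods. Then a low agent's per-good value becomes $1/(n+1)$, \emph{strictly} below her threshold $1/n$, so an empty bundle no longer suffices and every low agent must receive at least one good. With this fix both bounds go through: in the unscaled instance (agent $1$ values each good at $n+1$, the rest at $1/(n+1)$) at most two goods can reach agent $1$, capping Prop1 welfare at $O(n)$ against $\OPT = (n+1)^2$; in the scaled instance ($\sqrt{n}$ high agents valuing disjoint blocks of $\sqrt{n}$ goods at $1/\sqrt{n}$ each, low agents valuing all $n+1$ goods at $1/(n+1)$) every Prop1 allocation must give each of the $n - \sqrt{n}$ low agents a good, capping welfare at $O(1)$ against $\OPT = \Omega(\sqrt{n})$.
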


\section{Price of $\nicefrac{1}{2}$-Approximate Maximin Share Guarantee}\label{sec:mms}

We now study the price of approximate $\MMS$. In this section, we limit our attention to additive valuations. Our main result, stated next, settles the price of \hmms for scaled and unscaled additive valuations.

\begin{restatable}{theorem}{TheoremMMS}
	\label{theorem:mms}
	The price of \hmms is $\Theta(\sqrt{n})$ for scaled additive valuations and $\Theta(n)$ for unscaled additive valuations.
\end{restatable}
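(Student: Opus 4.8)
The plan is to prove \Cref{theorem:mms} by establishing matching upper and lower bounds, mirroring the structure of the \EFone argument but adapted to the maximin share guarantee under additive valuations. The lower bounds of $\Omega(\sqrt{n})$ (scaled) and $\Omega(n)$ (unscaled) should follow from existing or easily-modified constructions: for scaled valuations, the $\Omega(\sqrt{n})$ bound can be adapted from the construction of \citet{CKKK09} (noted in the introduction as yielding an $\Omega(\sqrt{n})$ lower bound for approximate \MMS), and for unscaled valuations the same single-agent-dominates-everything construction used in the proof of \Cref{theorem:efone} works — with $n$ goods, the first agent valuing each good at $n$ and the rest at $1/n$, any \hmms allocation must hand each of the low-value agents roughly one good (since their maximin share with $m = n$ goods is positive), forcing the welfare down to $O(n)$ against $\OPT = n^2$.

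For the upper bounds, first I would mimic the two-regime split from \Cref{sec:ef1}. The easy absolute guarantee: giving each agent a single high-value good via a max-weight matching, then completing to a \hmms allocation, should yield welfare $\Omega(\tfrac{1}{n}\sum_i v_i([m]))$, handling the unscaled $O(n)$ bound and the low-$\OPT$ regime of the scaled case exactly as in \Cref{lemma:efone-half}. The key point here is that an analogue of the \citet{LMMS04} envy-cycle completion exists for \hmms (or more precisely, that a $(\nicefrac12-\epsilon)$-\MMS allocation can always be built while preserving a per-agent welfare floor); I would lean on the known existence of $\nicefrac34$-\MMS allocations \cite{garg2020improved,ghodsi2018fair} as the backbone, using such an allocation as a fallback that already dominates $\tfrac12 \MMS_i$.

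The substantive part is the high-$\OPT$ regime for scaled additive valuations, where I want an algorithm producing a $(\nicefrac12-\epsilon)$-\MMS allocation with welfare $\Omega(\OPT/\sqrt{n})$. The natural approach is to start from a welfare-maximizing (or welfare-near-optimal) allocation — under additive valuations $\OPT$ is computable or well-approximable, so this is cleaner than the subadditive case — and then repair it to satisfy the \MMS floor while surrendering only an $O(\sqrt{n})$ factor of welfare. Concretely, I would identify the ``few'' agents who contribute the bulk of the welfare versus the ``many'' who need their \MMS guarantee topped up, set aside a small reserve of goods to bring every deficient agent up to $(\nicefrac12-\epsilon)\MMS_i$, and bound the welfare lost to that reserve by $O(\sqrt{n})$ using the scaling $\sum_i v_i([m]) = n$ together with a counting argument analogous to \Cref{ineq:intersection-count}. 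The main obstacle I anticipate is precisely this repair step: unlike \EFone, the \MMS constraint is not defined relative to other agents' bundles but against a fixed per-agent threshold $\MMS_i$, so there is no cheap ``envy up to one good'' slack to exploit; I expect to need a careful argument that the total value required to satisfy all thresholds simultaneously is $O(\sqrt{\OPT \cdot n})$ or similar, and that the guarantee can only be pushed to $\nicefrac12-\epsilon$ (not exactly $\nicefrac12$) because the efficient construction of half-\MMS-type allocations inherently incurs an $\epsilon$ loss. Finally, I would combine the two regimes by running both procedures and taking the better allocation, exactly as in the proof of \Cref{theorem:efone}.
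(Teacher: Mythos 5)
Your lower-bound constructions and your overall two-regime skeleton (absolute guarantee for the unscaled/low-$\OPT$ case, a separate algorithm for high $\OPT$, take the better of the two) match the paper's structure, but both upper-bound arguments have genuine gaps at exactly the points where the paper does its real work. First, your absolute guarantee rests on an ``analogue of \citet{LMMS04} for \hmms'' --- a procedure that completes a max-weight matching into a \hmms allocation while preserving each agent's matched value. No such tool exists, and this is precisely why the paper abandons the matching approach used for \EFone: envy-cycle elimination preserves \EFone and per-agent floors, but nothing analogous is known for \MMS, whose constraint is an absolute per-agent threshold rather than a relative one. Leaning on the existence of $\nicefrac{3}{4}$-\MMS allocations does not help, since such an allocation carries no welfare guarantee whatsoever --- the difficulty is achieving fairness and welfare \emph{simultaneously}. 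The paper instead proves \Cref{lemma:mms-half} by refining the greedy algorithm of \citet{amanatidis2017approximation} (argmax tie-breaking over pairs $(i,g)$ with $v_i(g) \geq \frac{1}{2|A|}v_i(G)$, then Prop1 on the remainder) and establishing the welfare bound $\sw(\calB) \geq \frac{1}{3n}\sum_i v_i([m])$ through the inductive value-share inequality of \Cref{lem:val-share}; none of this is recoverable from your sketch.

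Second, in the high-$\OPT$ regime you correctly identify the obstacle --- there is no ``envy up to one good'' slack to exploit, and a repair step must satisfy fixed thresholds $\MMS_i$ --- but you leave it unresolved, proposing only a vague ``reserve'' of goods with a hoped-for $O(\sqrt{\OPT \cdot n})$ bound. The missing idea is \Cref{lemma:mms-flow}: if bundles assigned so far are each either singletons or worth at most $\MMS_\ell$ to a remaining agent $\ell$, then the unassigned goods are still worth at least $(n-|S|)\cdot\MMS_\ell$ to $\ell$. This is what allows \algmms to greedily hand out bundles (growing them along a line graph ordered by $\calW^*$, with the $P$/$T$ bookkeeping) without any reserve, since every remaining agent provably retains enough value to be satisfied later; the welfare loss is then controlled by a charging argument showing $|T| \leq 4\sqrt{n}$ (\Cref{lem:termination}), not by a counting argument like \Cref{ineq:intersection-count}. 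Finally, a subtler but real problem: the theorem concerns the price of \emph{exact} \hmms, and the paper's $\epsilon$ is purely computational (it enters only through the PTAS estimates $Z_i$; setting $Z_i = \MMS_i$ yields exact \hmms allocations existentially). Your claim that the guarantee ``can only be pushed to $\nicefrac{1}{2}-\epsilon$'' would, if taken at face value, fail to bound the price of \hmms as stated.
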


We begin by proving the upper bound, which is based on \Cref{alg:mms-abs,alg:mms}. Recall that $\calW^* = (W^*_1, \ldots, W^*_n) \in \argmax{\calA \in \Pi_n([m])} \sw(\calA)$ denotes a social welfare maximizing allocation, and $\OPT = \sw(\calW^*)$ is the maximum social welfare. 

\subsection{An Absolute Welfare Guarantee}\label{sec:mms-absolute}

First, we show that when agents have additive valuations $\set{v_i}_{i \in [n]}$ (not necessarily scaled), there always exists a \hmms allocation $\calA$ with social welfare $\sw(\calA) \ge \frac{1}{3n} \sum_{i=1}^n v_i([m])$. In fact, \Cref{alg:mms-abs} computes one such allocation efficiently. 

The implication of this result is twofold. First, for both scaled and unscaled additive valuations, this establishes that the price of \hmms is $O(n)$ because, trivially, $\OPT \le \sum_{i=1}^n v_i([m])$. For unscaled valuations, this is precisely the bound we seek. For scaled valuations, we need to improve this to $O(\sqrt{n})$. To that end, note that for scaled valuations, $v_i([m])=1$ for each agent $i \in [n]$, so $\sw(\calA) \ge \nicefrac{1}{3} = \Omega(1)$. Thus, at least when $\OPT = O(\sqrt{n})$, this already establishes the desired upper bound of $O(\sqrt{n})$ on the price of \hmms, allowing us to limit our attention to instances with $\OPT = \Omega(\sqrt{n})$, as we do in \Cref{sec:mms-high-opt}.

\begin{algorithm}[htb!]
	\caption{\algmmsabs} \label{alg:mms-abs}  
	\textbf{Input:} A fair division instance $\calI= \langle [n], [m], \set{v_i}_i \rangle$ with additive valuations $\set{v_i}_{i \in [n]}$. \\
	\textbf{Output:} A \hmms allocation $\calB$ with $\sw(\calB) \ge \frac{1}{3n} \sum_{i=1}^n v_i([m])$. 
	\begin{algorithmic}[1]
		\setcounter{ALC@unique}{0}
		\small
		\STATE Initialize set of agents $A=[n]$, set of goods $G=[m]$, and bundles $B_i = \emptyset$, for all $i \in [n]$ 
		\WHILE{there exists agent $i \in A$ and good $g \in G$ such that $v_i(g) \geq \frac{1}{2|A|} v_i(G)$}
		\STATE Set $(i', g') \in \argmax{(i,g) \in A \times G \ : \ v_i(g) \geq \frac{1}{2|A|} v_i(G) } \ v_i(g) $ \label{step:arg-max}
		\STATE Set $B_{i'} =\{ g'\}$ and update $A \leftarrow A \setminus \{ i' \}$ along with $G \leftarrow G \setminus \{ g' \}$
		\ENDWHILE 
		
		\STATE Efficiently compute a Prop1 allocation $(B_i)_{i \in A}$ of the fair division instance $\tuple{A,G,\set{v_i}_{i \in A}}$ \label{step:prop1}
			
		\RETURN allocation $\calB = (B_1, B_2, \ldots, B_n)$
		
	\end{algorithmic}
\end{algorithm}

\Cref{alg:mms-abs} is a refinement of the algorithm of \citet{amanatidis2017approximation} for computing a \hmms allocation: in \Cref{step:arg-max}, we use an $\argmax{}$ to break ties, whereas they use an arbitrary pair $(i,g)$. Since \citet{amanatidis2017approximation} prove that their algorithm always returns a \hmms allocation, regardless of any tie-breaking, it follows that our refinement, \Cref{alg:mms-abs}, also always returns a \hmms allocation. It remains to prove that it provides the desired welfare guarantee as well. 

We note that there is another minor difference. In \Cref{step:prop1}, we compute an arbitrary Prop1 allocation of the remaining instance. While \citet{amanatidis2017approximation} use a specific method --- greedy round-robin --- that is guaranteed to return a Prop1 allocation, they note that their proof works for any Prop1 allocation. We state our algorithm with this generality in order to avoid introducing how greedy round-robin works. There are other methods that can be used to compute a Prop1 allocation~\cite{lipton2004approximately,CKMP+19,barman2019proximity}.

To establish the desired welfare bound, we first establish a lower bound on the agents' values for the goods that remain at various iterations of the algorithm. For the given instance $\langle [n], [m], \set{v_i}_{i \in [n]} \rangle$, we let $t$ denote the number of iterations of the while loop in \algmmsabs. We re-index the agents such that agents $1$ through $t$ receive, in order, a good (a singleton bundle) in this while loop. That is, each agent $i \in [t]$ is assigned a good in the $i^\text{th}$ iteration of the loop. The remaining $n-t$ agents receive a bundle through the Prop1 allocation computed in \Cref{step:prop1}. 

\begin{lemma}
	\label{lem:val-share}
	Consider a fair division instance $\calI = \tuple{ [n], [m], \set{v_i}_{i \in [n]} }$ with the re-indexing of the agents mentioned above, and $t$ denoting the number of agents assigned a good in the while loop of \algmmsabs. Let $\calB = (B_1, B_2, \ldots, B_n)$ denote the allocation returned by \algmmsabs. Then, we have 
	\begin{align*}
	\sum_{k=1}^{i-1} \frac{1}{n-k+1} \cdot v_k(B_k) \ + \ \frac{1}{n-i+1} \cdot v_i \left( [m] \setminus \cup_{k=1}^{i-1} B_k \right) & \geq \frac{1}{n} \cdot v_i([m]) \quad \text{for all $i \in [t]$, and} \\ 
	\sum_{k=1}^t \frac{1}{n-k+1} \cdot v_k(B_k) \ + \ \frac{1}{n-t} \cdot v_i \left( [m] \setminus \cup_{k=1}^t B_k \right) & \geq \frac{1}{n} \cdot v_i([m]) \quad \text{for all $i \in [n] \setminus [t]$.}
	\end{align*}
\end{lemma}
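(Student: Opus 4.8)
The plan is to prove both inequalities simultaneously through a single monotone \emph{potential} argument, exploiting additivity of the valuations together with the $\argmax$ tie-breaking rule of \Cref{alg:mms-abs}. Fix an agent $i$ and, for an index $j$, write $R_j \coloneqq [m] \setminus \cup_{k=1}^{j-1} B_k$ for the goods remaining at the start of the $j$-th while-loop iteration, so that $R_1 = [m]$ and, by additivity, $v_i(R_{j+1}) = v_i(R_j) - v_i(B_j)$. Define the potential
$$\Phi_i(j) \coloneqq \sum_{k=1}^{j-1} \frac{1}{n-k+1}\, v_k(B_k) + \frac{1}{n-j+1}\, v_i(R_j).$$
Then $\Phi_i(1) = \tfrac{1}{n} v_i([m])$, the quantity $\Phi_i(i)$ is exactly the left-hand side of the first inequality (for $i \in [t]$), and $\Phi_i(t+1)$ is exactly the left-hand side of the second inequality (for $i \in [n]\setminus[t]$, an agent which survives in $A$ through all $t$ iterations). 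Hence both claims reduce to showing $\Phi_i$ is non-decreasing, i.e. $\Phi_i(j+1) \ge \Phi_i(j)$ over the relevant range of $j$.

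First I would compute the one-step difference. Substituting $v_i(R_{j+1}) = v_i(R_j) - v_i(B_j)$ and collecting terms yields
$$\Phi_i(j+1) - \Phi_i(j) = \frac{v_j(B_j)}{n-j+1} - \frac{v_i(B_j)}{n-j} + \frac{v_i(R_j)}{(n-j)(n-j+1)}.$$
Writing $a \coloneqq n-j+1$ (so $a \ge 2$ over the range of interest, since $j \le i-1 \le n-1$ in the first case and $j \le t \le n-1$ in the second) and clearing the positive denominator $a(a-1)$, the desired $\Phi_i(j+1) \ge \Phi_i(j)$ is equivalent to the per-iteration inequality
$$(a-1)\, v_j(B_j) + v_i(R_j) \ \ge \ a \cdot v_i(B_j). \qquad (\star)$$

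The heart of the proof, and the step I expect to be the main obstacle, is establishing $(\star)$; I would prove it by a two-case split on how much agent $i$ values the good $B_j = \{g'_j\}$ removed in iteration $j$, relative to the threshold $\tfrac{1}{2a}\, v_i(R_j)$ governing the while loop. If $v_i(B_j) \ge \tfrac{1}{2a}\, v_i(R_j)$, then $(i, g'_j)$ is itself a feasible pair at iteration $j$ (agent $i$ is still in $A$, and $g'_j \in R_j$ clears the threshold), so the $\argmax$ choice in \Cref{step:arg-max} forces $v_j(B_j) \ge v_i(B_j)$; combined with $v_i(R_j) \ge v_i(B_j)$ (monotonicity, as $B_j \subseteq R_j$) this gives $(a-1)v_j(B_j) + v_i(R_j) \ge (a-1)v_i(B_j) + v_i(B_j) = a\, v_i(B_j)$. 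Otherwise $v_i(B_j) < \tfrac{1}{2a}\, v_i(R_j)$, whence $a\, v_i(B_j) < \tfrac12 v_i(R_j) \le v_i(R_j) \le (a-1)v_j(B_j) + v_i(R_j)$ by nonnegativity of the valuations. Either way $(\star)$ holds.

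Chaining $(\star)$ across iterations then yields $\Phi_i(i) \ge \Phi_i(1) = \tfrac1n v_i([m])$ for $i \in [t]$ and $\Phi_i(t+1) \ge \Phi_i(1) = \tfrac1n v_i([m])$ for $i \in [n]\setminus[t]$, which are precisely the two stated inequalities. The only remaining points are bookkeeping after the re-indexing — confirming agent $i$ stays unassigned, hence in $A$, at every iteration $j$ appearing in its telescoping sum (which holds since $j < i$ in the first case and $j \le t < i$ in the second) — and checking that $n-j$ and $n-j+1$ never vanish, which follows from the same index bounds (the second inequality being non-vacuous only when $t < n$).
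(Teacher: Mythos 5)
Your proof is correct and is essentially the paper's own argument in different clothing: your potential $\Phi_i(j)$ is precisely the left-hand side of the inequality that the paper proves by induction on $j$, and your two cases for $(\star)$ (using the arg-max tie-breaking when $v_i(B_j)$ clears the while-loop threshold, and trivial algebra from nonnegativity otherwise) correspond exactly to the paper's Case~2 and Case~1. The only superficial difference is the case-split threshold ($\tfrac{1}{2a}v_i(R_j)$ for you versus $\tfrac{1}{a}v_i(R_j)$ in the paper), which does not change the substance of the argument.
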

\begin{proof}
	For each agent $i \in [n]$, we establish the following inequality for every index $j \in \set{0,\ldots,\min\set{i-1, t}}$, via an induction on $j$. Then, instantiating $j=\min\set{i-1,t}$ yields the desired result.
	\begin{align}
	\sum_{k=1}^j \frac{1}{n-k+1} \cdot v_k(B_k) \ + \ \frac{1}{n-j} \cdot v_i \left( [m] \setminus \cup_{k=1}^{j} B_k \right) & \geq \frac{1}{n} \cdot v_i([m]). \label{ineq:induct}
	\end{align}
	
	Fix an arbitrary agent $i \in [n]$. The base case of $j=0$ is trivial, as it reduces to the inequality $\frac{1}{n} \cdot v_i([m]) \ge \frac{1}{n} \cdot v_i([m])$. 		For the induction hypothesis, suppose that \Cref{ineq:induct} holds with the index set to $j-1$. That is, we have 
	\begin{align}
	\sum_{k=1}^{j-1} \frac{1}{n-k+1} \cdot v_k(B_k) \ + \ \frac{1}{n-j+1} \cdot v_i \left( [m] \setminus \cup_{k=1}^{j-1} B_k \right) & \geq \frac{1}{n} \cdot v_i([m]). \label{ineq:ind-hyp}
	\end{align}
	
	To complete the proof via induction, we need to extend this inequality to index $j$. Note that $j \le \min\set{i-1,t} \le t$. Hence, agent $j$ is assigned a singleton bundle $B_j \coloneqq \set{g_j}$, and this assignment is made before $i$ receives a bundle. Consider two cases about the value that agent $i$ places on the good $g_j$.
	
	\medskip\noindent\emph{Case 1: $v_i(B_j) < \frac{1}{n-j+1} v_i \left( [m] \setminus \cup_{k=1}^{j-1} B_k \right)$.} Note that before agent $j$ was assigned $B_j$, there were $n-j+1$ agents left. In this case, agent $i$'s value for $B_j$ is below the ``average value'' of the remaining goods divided into $n-j+1$ bundles. Hence, the ``average'' goes up after assigning $B_j$ to agent $j$, i.e., 
	\begin{equation}\label{ineq:average-up}
	\frac{1}{n-j} v_i \left( [m] \setminus \cup_{k=1}^{j} B_k \right)  \geq \frac{1}{n-j+1} v_i \left( [m] \setminus \cup_{k=1}^{j-1} B_k \right).
	\end{equation}
	To see this formally, let us rearrange the inequality of this case as 
	\[
	(n-j+1) \cdot v_i(B_j) < v_i \left( [m] \setminus \cup_{k=1}^{j-1} B_k \right),
	\]
	or equivalently, 
	\[
	(n-j) \cdot v_i(B_j) < v_i \left( [m] \setminus \cup_{k=1}^{j} B_k \right).
	\]
	Adding $(n-j) \cdot v_i \left( [m] \setminus \cup_{k=1}^{j} B_k \right)$ on both sides and dividing both sides by $(n-j) \cdot (n-j+1)$ yields \Cref{ineq:average-up}. 
	
	Combining this with the induction hypothesis, we get that in \Cref{ineq:ind-hyp}, changing the index from $j-1$ to $j$ adds $\frac{1}{n-j+1} \cdot v_j(B_j) \ge 0$ to the first term in LHS, and due to \Cref{ineq:average-up}, the second term also weakly increases. Hence, the LHS weakly increases, and by the induction hypothesis, stays at least $\frac{1}{n} \cdot v_i([m])$. 
		
	\medskip\noindent\emph{Case 2: $v_i(B_j) \geq \frac{1}{n-j+1} v_i \left( [m] \setminus \cup_{k=1}^{j-1} B_k \right)$.} This condition implies that the pair $(i, g_j)$ was considered in \Cref{step:arg-max} of the algorithm in iteration $j$ of the while loop. The argmax selection criterion of this step implies that $v_j(B_j) \geq v_i(B_j)$. Now, we have
	\begin{align*}
	&\sum_{k=1}^{j} \frac{1}{n-k+1} \cdot v_k(B_k) \ + \ \frac{1}{n-j} \cdot v_i \left( [m] \setminus \cup_{k=1}^{j} B_k \right)\\
	&\ge \sum_{k=1}^{j} \frac{1}{n-k+1} \cdot v_k(B_k) \ + \ \frac{1}{n-j+1} \cdot v_i \left( [m] \setminus \cup_{k=1}^{j} B_k \right)\\
	&= \sum_{k=1}^{j-1} \frac{1}{n-k+1} \cdot v_k(B_k) \ + \frac{1}{n-j+1} \cdot v_j(B_j)\ +\ \frac{1}{n-j+1} \cdot v_i \left( [m] \setminus \cup_{k=1}^{j} B_k \right)\\
	&\ge \sum_{k=1}^{j-1} \frac{1}{n-k+1} \cdot v_k(B_k) \ + \frac{1}{n-j+1} \cdot v_i(B_j)\ +\ \frac{1}{n-j+1} \cdot v_i \left( [m] \setminus \cup_{k=1}^{j} B_k \right)\\
	&= \sum_{k=1}^{j-1} \frac{1}{n-k+1} \cdot v_k(B_k) \ +\ \frac{1}{n-j+1} \cdot v_i \left( [m] \setminus \cup_{k=1}^{j-1} B_k \right)\\
	&\ge \frac{1}{n} \cdot v_i([m]),
	\end{align*}
	where the third transition follows due to $v_j(B_j) \ge v_i(B_j)$, and the last inequality follows due to the induction hypothesis (\Cref{ineq:ind-hyp}).
	
	Setting $j = \min\set{i-1,t}$ for each agent $i$ in \Cref{ineq:induct} completes the proof.
\end{proof}

We now use \Cref{lem:val-share} to derive the desired guarantees for \algmmsabs (\Cref{alg:mms-abs}).

\begin{lemma}\label{lemma:mms-half}
	Given any fair division instance $\calI = \tuple{ [n], [m], \set{v_i}_{i \in [n]} }$ with additive valuations, \algmmsabs (\Cref{alg:mms-abs}) efficiently computes a \hmms allocation $\calB$ with social welfare $\sw(\calB) \ge \frac{1}{3n} \sum_{i=1}^n v_i([m])$. 
\end{lemma}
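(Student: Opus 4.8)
The plan is to combine the \MMS guarantee (which Amanatidis et al.\ already establish, and which \Cref{alg:mms-abs} inherits) with the welfare bound coming out of \Cref{lem:val-share}. Since the \hmms property is free, the entire content of the proof is the welfare inequality $\sw(\calB) \ge \frac{1}{3n}\sum_{i=1}^n v_i([m])$.

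First I would record that each agent's value for her own bundle can be lower-bounded using \Cref{lem:val-share}. For an agent $i \in [t]$ (assigned a singleton in the while loop), the lemma gives a bound involving $v_i\!\left([m]\setminus\cup_{k=1}^{i-1}B_k\right)$; since this is exactly the set from which $B_i=\{g_i\}$ was carved, and $v_i$ is additive, I expect to relate $v_i(B_i)$ to the ``remaining average'' $\frac{1}{n-i+1}v_i\!\left([m]\setminus\cup_{k=1}^{i-1}B_k\right)$ using the while-loop selection rule (the pair was chosen with $v_i(g_i)\ge\frac{1}{2(n-i+1)}v_i(G)$, so $v_i(B_i)$ is at least half that average). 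For an agent $i\in[n]\setminus[t]$ (assigned via the Prop1 allocation in \Cref{step:prop1} over the instance on the residual goods $[m]\setminus\cup_{k=1}^tB_k$ with $n-t$ agents), Prop1 together with additivity gives $v_i(B_i)\ge\frac{1}{n-t}v_i\!\left([m]\setminus\cup_{k=1}^tB_k\right)$ up to one good — and crucially, after the while loop terminates, no single remaining good is worth more than $\frac{1}{2(n-t)}$ of the residual value to $i$, so the ``up to one good'' slack is at most half the average, again yielding a constant-fraction bound.

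The key step is then to \textbf{sum these per-agent bounds and telescope}. The two displays in \Cref{lem:val-share} are engineered so that the coefficient $\frac{1}{n-k+1}$ on $v_k(B_k)$ is the same across all the agent-$i$ inequalities. I would sum the first family of inequalities over $i\in[t]$ and the second over $i\in[n]\setminus[t]$, so that the left-hand sides accumulate terms of the form $\frac{1}{n-k+1}v_k(B_k)$ (which are bounded above by $\sw(\calB)$ after accounting for multiplicities) plus residual-value terms (which I bound by $\sw(\calB)$ via the per-agent arguments of the previous paragraph), while the right-hand side sums to $\frac{1}{n}\sum_{i=1}^n v_i([m])$. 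Tracking the multiplicity with which each $v_k(B_k)$ appears and the $\frac{1}{n-k+1}$ weights is where the arithmetic must be done carefully; I anticipate that the total weight on the collected welfare terms is $O(1)$, and that combining everything yields $3\,\sw(\calB)\ge\frac{1}{n}\sum_{i=1}^n v_i([m])$, i.e.\ the claimed factor of $\frac{1}{3n}$.

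\textbf{The main obstacle} I foresee is the bookkeeping in the summation: ensuring that the weighted welfare contributions $\sum_k\frac{c_k}{n-k+1}v_k(B_k)$, after aggregating over all $n$ inequalities, really collapse to at most $3\,\sw(\calB)$ rather than blowing up — the harmonic-type weights $\frac{1}{n-k+1}$ are deceptive, and I would need to verify that each agent's own term $v_i(B_i)$ is charged with total coefficient at most a constant. The secondary subtlety is handling the residual-value terms $v_i\!\left([m]\setminus\cup_k B_k\right)$ for the Prop1 agents uniformly with the while-loop agents; I expect the termination condition of the while loop (no remaining good exceeds half the residual average for any remaining agent) to be exactly what makes the Prop1 ``minus one good'' slack controllable, closing the gap to the constant $\frac{1}{3}$.
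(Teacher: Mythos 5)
Your proposal is correct and follows essentially the same route as the paper's proof: lower-bound each agent's bundle value by half the residual average (via the while-loop selection rule for agents in $[t]$, and via Prop1 plus the loop's termination condition for the rest), substitute into \Cref{lem:val-share} to replace the residual terms by $2\,v_i(B_i)$, and sum over all agents, where the double-sum coefficients $\sum_{i>k}\frac{1}{n-k+1}=\frac{n-k}{n-k+1}\le 1$ give total charge at most $3$ per agent. The bookkeeping you flag as the main obstacle works out exactly as you anticipate, yielding $3\,\sw(\calB)\ge\frac{1}{n}\sum_{i=1}^n v_i([m])$.
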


\begin{proof} 
	The fact that the algorithm runs efficiently and computes a \hmms allocation is due to \citet{amanatidis2017approximation}, as argued above. It remains to show that the stated welfare bound holds. 
	
	Recall that $t$ denotes the number of iteration of the while loop in \algmmsabs, and we re-indexed the agents so that each agent $i \in [t]$ is assigned a singleton bundle $B_i = \set{g_i}$ in iteration $i$ of the while loop. 
	
	For an agent $i \in [t]$, the selection in \Cref{step:arg-max} of the algorithm in $i^\text{th}$ iteration of the while loop ensures that $v_i(B_i) \geq \frac{1}{2(n-i+1)} \cdot v_i \left( [m] \setminus \cup_{k=1}^{i-1} B_k\right)$. Substituting this into \Cref{lem:val-share}, we have 
	\begin{align}\label{ineq:for-singleton}
	\sum_{k=1}^{i-1} \frac{1}{n-k+1} \cdot v_k(B_k) \ + \ 2 \cdot v_i(B_i) \geq \frac{1}{n} \cdot v_i([m]) \quad \text{for all $i \in [t]$}. 
	\end{align} 
	
	Similarly, fix an agent $i \in [n]\setminus[t]$. Note that at the termination of the while loop, we have $A = [n]\setminus[t]$, $G = [m] \setminus \cup_{k=1}^t B_k$ and $v_i(g) < \frac{1}{2(n-t)} v_i(G)$ for each $g \in G$. Because \Cref{step:prop1} computes a Prop1 allocation of $G$ to agents in $A$, we have that for some $g \in G$, $v_i(B_i) + v_i(g) \ge \frac{1}{n-t} v_i(G)$. The inequality $v_i(g) < \frac{1}{2(n-t)} v_i(G)$ yields $v_i(B_i) > \frac{1}{2(n-t)} v_i(G)$. Substituting this into \Cref{lem:val-share}, we have 
	\begin{align}\label{ineq:past-singleton}
	\sum_{k=1}^{t} \frac{1}{n-k+1} \cdot v_k(B_k) \ + \ 2 \cdot v_i(B_i) \geq \frac{1}{n} \cdot v_i([m]) \quad \text{for all $i \in [n] \setminus [t]$}. 
	\end{align}
	
	Summing \Cref{ineq:for-singleton,ineq:past-singleton} across all agents yields the desired welfare bound:
	\begin{align*}
	\frac{1}{n} \sum_{i=1}^n v_i([m]) &  \leq \sum_{i=1}^t \left( 2 \cdot v_i(B_i) + \sum_{k=1}^{i-1} \frac{1}{n-k+1} \cdot v_k(B_k) \right) + \sum_{i=t+1}^n \left( 2 \cdot v_i(B_i) + \sum_{k=1}^{t} \frac{1}{n-k+1} \cdot v_k(B_k) \right) \\ 
	& = 2 \cdot \sum_{i=1}^n v_i(B_i) + \sum_{i=1}^t \sum_{k=1}^{i-1} \frac{1}{n-k+1} \cdot v_k(B_k) + \sum_{i=t+1}^n \sum_{k=1}^{t} \frac{1}{n-k+1} \cdot v_k(B_k) \\
	& = 2 \cdot \sum_{i=1}^n v_i(B_i) + \sum_{k=1}^t \sum_{i=k+1}^n \frac{1}{n-k+1} \cdot v_k(B_k) \\
	& = 2 \cdot \sum_{i=1}^n v_i(B_i)  + \sum_{k=1}^t \frac{n-k}{n-k+1} \cdot v_k(B_k)  \\
	& \leq 2 \cdot \sum_{i=1}^n v_i(B_i)  + \sum_{k=1}^n v_k(B_k) = 3 \cdot \sum_{i=1}^n v_i(B_i).
	\end{align*}
	This completes the proof of the lemma. 
\end{proof}

\subsection{The Case of High Optimal Welfare}\label{sec:mms-high-opt}

As argued in \Cref{sec:mms-absolute}, \Cref{lemma:mms-half} allows us to restrict our attention to scaled additive valuations in which the optimal social welfare $\OPT = \Omega(\sqrt{n})$. Similarly to the case of EF1, we can now safely sacrifice $O(\sqrt{n})$ welfare, and simply achieve $O(\sqrt{n})$ approximation of the \emph{remaining} welfare. This is achieved by \algmms (\Cref{alg:mms}). 

However, \algmms requires knowledge of the maximin share $\MMS_i$ of each agent $i$. While computing this quantity is known to be strongly NP-hard, there exists a polynomial-time approximation scheme (PTAS) for it~\cite{woeginger1997polynomial}. For a fixed $\epsilon \in (0,1)$, this PTAS can compute an estimate $Z_i \in [(1-\epsilon) \MMS_i,\MMS_i]$ for each agent $i$ in polynomial time. We pass these estimates as input to \algmms, which runs in polynomial time and yields a $(\frac{1}{2}-\epsilon)$-$\MMS$ allocation with the desired welfare guarantee. 

We emphasize that the approximation introduced here is solely for computational purposes. To derive our main existential result about the price of $\hmms$, we can simply pass $Z_i = \MMS_i$ to $\algmms$, and it would return an exact $\hmms$ allocation with the desired welfare guarantee. 

\newcommand{\BIGCOMMENT}[1]{%
	\hspace{0.25in}$\Bigg\{$\ \begin{minipage}[c]{0.7\textwidth}#1\strut\end{minipage}$\Bigg\}$%
}

\newcommand{\LONGCOMMENT}[1]{%
	\STATE // {#1}%
}

\newcommand{\algorithmiccontinue}{\textbf{continue}}
\newcommand{\CONTINUE}{\STATE \algorithmiccontinue}

\begin{algorithm}[htb!]
	\caption{\algmms} \label{alg:mms}  
	\textbf{Input:} A fair division instance $\calI= \langle [m],[n], \{v_i\}_i \rangle$ with scaled additive valuations, and for a fixed $\epsilon \in [0,1)$, an estimate $Z_i \in [(1-\epsilon)\MMS_i,\MMS_i]$ for each agent $a$.
	
	\textbf{Output:} A $\left(\frac{1}{2}-\epsilon\right)$-$\MMS$ allocation $\mathcal{B}$ with $\sw(\mathcal{B}) \ge \frac{1}{3\sqrt{n}} \OPT - \frac{4}{3}$.
	
	\begin{algorithmic}[1]
		\small
		\STATE Compute a social welfare maximizing allocation $\calW^* \in \argmax{\calA \in \Pi_n([m])} \sw(\calA)$
		\STATE Index the goods as $g_1,\ldots,g_m$ so that, for each $i \in [n]$, the goods in $W^*_i$ receive consecutive indices \label{step:reindex}
		\STATE Initialize $\calB$ with $B_i = \emptyset$ for each $i \in [n]$
		\STATE Initialize $P = T = \emptyset$ 
		\STATE For each $i \in [n]$ with $\MMS_i = 0$, update $P \gets P \cup \set{i}$ if $v_i(W^*_i) = 0$, and $T \gets T \cup \set{i}$ otherwise. \COMMENT{$\MMS_i = 0$ iff agent $i$ has positive value for less than $n$ goods, which can be checked efficiently.} \label{step:zero-mms}
				\STATE Let $\Gsingle = \set{ i \in [n] \ : \ Z_i < \frac{2}{3 \sqrt{n}} v_i(W^*_i) \text{, and  there exists } g \in W^*_i \text{ such that } v_i(g) \geq \frac{1}{3 \sqrt{n}} v_i(W^*_i) }$
		\FOR{each $i \in \Gsingle$}
			\STATE Pick $g_i \in \argmax{g \in W^*_i} v_i(g)$, and set $B_i = \set{g_i}$ 
			\STATE Update $P \gets P \cup \set{i}$, and if $i \in T$, update $T \gets T \setminus \set{i}$ \label{step:gamma-single}
		\ENDFOR
		\WHILE{there exists an agent $a \in [n] \setminus (P \cup T)$ and a good $h \in [m] \setminus \cup_{b \in [n]} B_b$ such that $v_{a} (h) \geq \frac{1}{2} \cdot Z_a$} 
		\STATE Set $B_a = \set{h}$. 
		\STATE If $v_a(B_a) \geq \frac{1}{3 \sqrt{n}} v_a(W^*_a)$, update $P \gets P \cup \set{a}$, else update $T \gets T \cup \set{a}$ \label{step:first-while}
		\ENDWHILE 
		\vspace*{3pt} 
		\STATE Let $R \gets [m] \setminus \cup_{a \in [n]} B_a$ be the set of remaining goods
		\STATE Initialize $K = \emptyset$ and index $t = 1$ 
		\FOR{$t=1,\ldots,m$} \label{step:second-while}
		\IF{$g_t \notin R$}
			\CONTINUE
		\ENDIF
		\STATE Update $K \leftarrow K \cup \set{g_t}$
		\vspace*{3pt} 
		\IF{there exists an agent $i \in T$ such that $g_t \in W^*_i$ and $v_i(K) \geq \frac{1}{3 \sqrt{n}} v_i(W^*_i)$}  \label{step:if-T-to-P}
		\STATE Set $(B_i,K) \gets (K,B_i)$ \COMMENT{Swap $B_i$ and $K$} \label{step:swap}
		\STATE Update $P \leftarrow P \cup \set{i}$ and $T \leftarrow T \setminus \set{i}$ \label{step:T-to-P}
		\ENDIF
		\IF{there exists an agent $a \in [n] \setminus (P \cup T)$ such that $v_a(K) \geq \frac{1}{2} \cdot Z_a$}
		\STATE Set $B_a = K$ and update $K = \emptyset$ \label{step:assign}
		\STATE If $v_a(B_a) \geq \frac{1}{3 \sqrt{n}} v_a(W^*_a)$, update $P \gets P \cup \set{a}$, else update $T \gets T \cup \set{a}$ \label{step:second-while-assign}
		\ENDIF
		\ENDFOR
		\STATE Let $X = [m] \setminus \cup_{a \in [n]} B_a$ be the set of unassigned goods. Assign each $g \in X$ to agent $i$ such that $g \in W^*_i$, i.e., $B_i \leftarrow B_i \cup (W^*_i \cap X)$.  
		\RETURN allocation $(B_1, B_2, \ldots, B_n)$
	\end{algorithmic}
\end{algorithm}

\begin{lemma}\label{lem:alg-mms}
Given a fair division instance $\calI = \tuple{[n], [m], \set{v_i}_{i \in [n]}}$ with scaled additive valuations, and, for a fixed $\epsilon \in [0,1)$, an estimate $Z_i \in [(1-\epsilon)\MMS_i,\MMS_i]$ for each agent $i$, \algmms (\Cref{alg:mms}) efficiently computes a $(\frac{1}{2}-\epsilon)$-$\MMS$ allocation $\calB$ with the property that $3 \sqrt{n} \cdot \sw(\calB) + 4 \sqrt{n} \ge \OPT$; here, $\OPT$ denotes the optimal social welfare in $\calI$.
\end{lemma}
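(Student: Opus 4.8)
The plan is to verify the three assertions of the lemma in turn: polynomial running time, the $\left(\frac12-\epsilon\right)$-$\MMS$ guarantee, and the welfare bound $3\sqrt{n}\cdot\sw(\calB)+4\sqrt{n}\ge\OPT$. Running time is the easiest: for additive valuations a welfare-maximizing $\calW^*$ is obtained by assigning each good to an agent valuing it most, the estimates $Z_i$ come from the cited PTAS, and both the while loop and the for loop touch each good only a bounded number of times. The two substantive claims are the fairness and welfare guarantees, and I would organize both around the partition of agents induced by the two operative thresholds: the $\MMS$ threshold $\frac12 Z_a$ and the welfare threshold $\tau_i \coloneqq \frac{1}{3\sqrt{n}}\,v_i(W^*_i)$ that governs membership in the set $P$.

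For the $\MMS$ guarantee I would argue bundle-by-bundle according to how an agent's final bundle was set. Agents placed in $P$ through $\Gsingle$ receive their most valuable good $g_i\in W^*_i$, and since $i\in\Gsingle$ forces $Z_i<\frac{2}{3\sqrt{n}}v_i(W^*_i)$ while $v_i(g_i)\ge\frac{1}{3\sqrt{n}}v_i(W^*_i)$, we get $v_i(g_i)>\frac12 Z_i\ge(\frac12-\epsilon)\MMS_i$. Agents assigned inside the while loop or via the free-agent branch of the for loop receive a bundle of value at least $\frac12 Z_a\ge(\frac12-\epsilon)\MMS_a$ by construction. The key observation for the swap step is monotonicity: an agent sits in $T$ precisely because her current bundle already satisfies $\MMS$ but falls below $\tau_i$, and a swap replaces it by a bundle $K$ with $v_i(K)\ge\tau_i$, so her value strictly increases and the guarantee is preserved. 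Agents with $\MMS_i=0$ are trivial, and the final redistribution of the unassigned goods $X$ only adds goods, which by additivity can never destroy the guarantee.

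The delicate case is an agent $a$ that is never assigned, i.e.\ remains outside $P\cup T$ and receives only $W^*_a\cap X$. Here I would use the following invariant produced by the two loops: once the while loop ends, every available good has value below $\frac12 Z_a$ to $a$, and whenever the accumulator $K$ in the for loop first reaches value $\frac12 Z_a$ for $a$ it is immediately handed to some free agent and reset, so no single connected chunk that leaves the pool can carry much of $a$'s value. Combined with $\MMS_a\le\frac1n v_a([m])=\frac1n$ under scaled valuations, this should force the residual $v_a(W^*_a\cap X)$ to still meet $(\frac12-\epsilon)\MMS_a$. I expect this leftover-agent analysis, together with making the ``no single chunk carries much value'' statement precise on the line graph, to be the main obstacle.

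For the welfare bound I would split $\OPT=\sum_i v_i(W^*_i)$ into agents inside and outside $P$ at termination. By the very definition of $P$, each $i\in P$ has $v_i(B_i)\ge\tau_i=\frac{1}{3\sqrt{n}}v_i(W^*_i)$, hence $\sum_{i\in P}v_i(W^*_i)\le 3\sqrt{n}\cdot\sw(\calB)$. It then remains to show $\sum_{i\notin P}v_i(W^*_i)\le 4\sqrt{n}$, which I would bound in two parts. First, every agent with $Z_i\ge\frac{2}{3\sqrt{n}}v_i(W^*_i)$ satisfies $\frac{2}{3\sqrt{n}}v_i(W^*_i)\le Z_i\le\MMS_i\le\frac1n$, so $v_i(W^*_i)\le\frac{3}{2\sqrt{n}}$; summed over at most $n$ agents this contributes $O(\sqrt{n})$. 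Second, the remaining non-$P$ agents satisfy $Z_i<\frac{2}{3\sqrt{n}}v_i(W^*_i)$ and, not being in $\Gsingle$, have no good in $W^*_i$ worth $\tau_i$; these I would control with a connected-component counting argument on the line graph $L$ of \Cref{step:reindex}, in the spirit of the EF1 analysis: the bundles assigned by the algorithm and the optimal bundles $W^*_i$ each partition $L$ into $O(n)$ connected pieces, bounding the total number of fragment crossings by $O(n)$, after which scaled valuations ($v_i(W^*_i)\le 1$) yield the remaining $O(\sqrt{n})$ contribution. Assembling the two bounds gives $\OPT\le 3\sqrt{n}\cdot\sw(\calB)+4\sqrt{n}$, as required.
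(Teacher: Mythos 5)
Your handling of the ``leftover agent'' case is the crux, and it is where the proposal breaks. You treat an agent $a$ that finishes outside $P \cup T$ as a genuine case, to be salvaged by showing the residual goods $W^*_a \cap X$ still give her $(\frac{1}{2}-\epsilon)\MMS_a$. No such argument can work: nothing prevents $W^*_a$ from being entirely consumed by bundles handed to other agents during the two loops, so $v_a(W^*_a \cap X)$ can be $0$ while $\MMS_a > 0$; bounding $\MMS_a \le \frac{1}{n}$ does not help. The paper instead proves this case \emph{cannot occur} (\Cref{lem:termination}): it first establishes a maximin-share preservation lemma (\Cref{lemma:mms-flow}, extending Amanatidis et al.) saying that if every assigned bundle is either a singleton or worth at most $\MMS_a$ to $a$, then the unassigned goods are worth at least $(n-|S|)\MMS_a$ to $a$; it then shows by induction over the algorithm's run that, as long as $a$ is free, every assigned bundle satisfies this hypothesis (your observation that no accumulated chunk exceeds $\frac{1}{2}Z_a$ plus the fact that single remaining goods are worth less than $\frac{1}{2}Z_a$ to $a$ is exactly the inductive step, giving $v_a(K) \le Z_a \le \MMS_a$). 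Consequently $v_a(X) \ge \MMS_a \ge \frac{1}{2}Z_a > 0$, and since the accumulator $K$ equals $X$ in the last iteration, \Cref{step:second-while-assign} would have assigned it to $a$ --- a contradiction. So the invariant you identified should be aimed at \emph{refuting} the leftover case, not accommodating it; without this, the $(\frac{1}{2}-\epsilon)$-$\MMS$ claim is unproven.

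The welfare bound has a second, related gap. Your split of non-$P$ agents is fine for those with $Z_i \ge \frac{2}{3\sqrt{n}}v_i(W^*_i)$ (indeed $\MMS_i \le \frac{1}{n}$ forces $v_i(W^*_i) \le \frac{3}{2\sqrt{n}}$ there, though this case is actually vacuous once one knows $T \subseteq \Ghard$ and $P \cup T = [n]$). But for the hard agents, the proposed ``fragment crossing'' count rests on a false premise: bundles built by the for loop are \emph{not} connected in the line graph $L$ --- the accumulator $K$ skips over previously assigned goods, and the swaps in \Cref{step:swap} exchange whole bundles --- so neither the $O(n)$-pieces bound nor the EF1-style marker argument applies. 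The paper uses a different charging scheme: for each non-$P$ hard agent $i$, fewer than $\sqrt{n}$ goods of $W^*_i$ are gone before the for loop (else we are done), so $v_i(R \cap W^*_i) \ge \frac{2}{3}v_i(W^*_i)$; every bundle assigned during the iterations spanning $W^*_i$ is worth less than $\frac{2}{3\sqrt{n}}v_i(W^*_i)$ to $i$ (because \Cref{step:if-T-to-P} never fires for $i$), so at least $\sqrt{n}-1$ assignments must end inside $W^*_i$. Charging each agent to the unique optimal bundle in which her first assigned bundle ends makes these charges disjoint, yielding $|\overline{T}| \le 4\sqrt{n}$, and scaled valuations then give $\sum_{i \in \overline{T}} v_i(W^*_i) \le 4\sqrt{n}$. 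This quantitative step --- each bad agent forces $\Omega(\sqrt{n})$ assignments localized in her optimal bundle --- is the missing link between ``$O(n)$ events'' and ``$O(\sqrt{n})$ lost welfare.''
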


The intuition behind the algorithm is as follows. Our goal is to assign each agent $i$ a bundle of value at least $\frac{1}{2} \cdot Z_i \ge \left(\frac{1}{2}-\epsilon\right) \cdot \MMS_i$, and ensure that most (all but $\sqrt{n}$) agents $i$ achieve a value at least $\Omega\left(\frac{1}{\sqrt{n}} v_i(W^*_i)\right)$; here, $\calW=(W^*_1, \ldots, W^*_n)$ denotes a social welfare maximizing allocation. Therefore, throughout the algorithm, we keep track of two sets of agents, $T$ (temporary) and $P$ (permanent). Each agent $i \in T \cup P$ must have received a bundle $B_i$ worth $v_i(B_i) \ge \frac{1}{2} \cdot Z_i$, and agent $i \in P$ further has $v_i(B_i) \ge \frac{1}{3\sqrt{n}} \cdot v_i(W^*_i)$. The algorithm ensures that an agent is never removed after being added to $P \cup T$. Specifically, once she is added to $T$, she can only be moved to $P$. Additionally, once an agent is included in the set $P$, her assignment is never updated and she remains in $P$. 

Line \ref{step:zero-mms} and \ref{step:gamma-single} handle easy cases, when an agent $i$ either has $\MMS_i = 0$, or can be added directly to $P$ by giving her a single good from $W^*_i$. Line \ref{step:first-while} addresses agents to whom giving a single good (not necessarily from $W^*_i$) is sufficient to add them to $P \cup T$. These steps leverage the fact (proved below) that the maximin share is maintained while assigning away singleton bundles. 

Finally, Line \ref{step:second-while} leverages an idea similar to what we utilized for \EFone. This step slowly grows a bundle by iteratively adding goods ordered according to $\calW^*$, and assigns the bundle as soon as its value for some agent is at least half of her maximin share. Line \ref{step:T-to-P} plays a key role in bookkeeping, as we show in our proofs.

We partition the set of agents into three sets---$\Gmms$, $\Gsingle$, and $\Ghard$---where agent $i$ is placed into a set depending on how the estimate $Z_i$ of the maximin share $\MMS_i$ relates to $v_i(W^*_i)$.  
\begin{align*}
\Gmms &\coloneqq \set{i \in [n] \ : \ Z_i \geq \frac{2}{3 \sqrt{n}} v_i(W^*_i)},\\
\Gsingle &\coloneqq \set{ i \in [n] \ : \ Z_i < \frac{2}{3 \sqrt{n}} v_i(W^*_i) \text{, and  there exists } g \in W^*_i \text{ such that } v_i(g) \geq \frac{1}{3 \sqrt{n}} v_i(W^*_i) },\\
\Ghard &\coloneqq \set{ i \in [n] \ : \ Z_i < \frac{2}{3 \sqrt{n}} v_i(W^*_i) \text{, and  for all } g \in W^*_i \text{ we have } v_i(g) < \frac{1}{3 \sqrt{n}} v_i(W^*_i) }.
\end{align*}

For an agent $i \in \Gsingle$, giving her a single good from $\argmax{g \in W^*_i} v_i(g)$ gives her value at least $\frac{1}{3 \sqrt{n}} v_i(W^*_i) > \frac{1}{2} \cdot Z_i$. Thus, \algmms begins by giving each such agent $i$ her most valuable good from $W^*_i$ and adding her directly to $P$. Next, for an agent $i \in \Gmms$, giving her value at least $\frac{1}{2} \cdot Z_i$ also guarantees that her value is at least $\frac{1}{3 \sqrt{n}} v_i(W^*_i)$; thus, \algmms solely aims to give such agents $\frac{1}{2} \cdot Z_i$ value. The situation is more tricky for agents in $\Ghard$. 

Before we begin to analyze \algmms, we need the following lemma. This builds upon a similar result by \citet{amanatidis2017approximation}, who show that giving a subset of agents a single good each and removing these agents and their assigned goods from the instance cannot reduce the maximin share of any remaining agent. In our case, we need to consider assigning bundles that either consist of a single good or are of limited worth to an agent $\ell$, and we are interested in the value that agent $\ell$ must have for the remaining goods. This lemma will be useful later in analyzing \algmms. Let us first introduce some useful notation. For an agent $i \in [n]$, an integer parameter $k \in \bbN$, and a subset of goods $G \subseteq [m]$, define 
\begin{align*}
\textstyle\MMS_i^k (G) \coloneqq \max_{(P_1, \ldots, P_k) \in \Pi_k(G)} \ \min_{j \in [k]} \ v_i(P_j). 
\end{align*}
This is the highest value that agent $i$ can guarantee by splitting $G$ into $k$ bundles and receiving the worst bundle. Note that $\MMS_i = \MMS_i^n([m])$.

\begin{lemma}
\label{lemma:mms-flow}
Let $\calI = \tuple{[n], [m], \set{v_i}_{i \in [n]}}$ be a fair division instance with additive valuations. Let $\ell \in [n]$ be an agent. Let $\set{B_a}_{a \in S}$ be a collection of bundles assigned to a subset of agents $S \subseteq [n] \setminus \set{\ell}$ with the property that, for all $a \in S$, we either have $|B_a| = 1$ or $v_{\ell}(B_a) \leq \MMS_{\ell}$. Then, the value of agent $\ell$ for the unassigned goods satisfies 
\[
v_\ell \left( [m] \setminus \cup_{a \in S} B_a \right) \geq  \left( n-|S| \right) \cdot \MMS_\ell.
\]
\end{lemma}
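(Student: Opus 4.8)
The plan is to set $\mu \coloneqq \MMS_\ell = \MMS_\ell^n([m])$ and to split the removed bundles by type, exactly as the hypothesis suggests: let $S_1 \subseteq S$ be the agents whose $B_a$ is a singleton, and let $S_2 = S \setminus S_1$ be those with $v_\ell(B_a) \le \mu$. Since the assigned bundles are disjoint and $v_\ell$ is additive, it suffices to bound the value of the surviving goods by peeling off these two groups in the right order. The guiding observation is that $\MMS_\ell^p(G) \ge \mu$ immediately implies $v_\ell(G) \ge p\,\mu$, because the $p$ parts witnessing the maximin value each have value at least $\mu$. So if I can show that the goods surviving the removal of just the singletons still admit a partition into $n - |S_1|$ parts each of value $\ge \mu$, I will be essentially done.

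For the first (singleton) phase I will invoke the maximin-preservation fact of \citet{amanatidis2017approximation}: for any good $g \in G$ and any $k$, one has $\MMS_\ell^{k-1}(G \setminus \set{g}) \ge \MMS_\ell^k(G)$. Its one-line justification is to take an optimal $k$-partition of $G$, locate the part containing $g$, delete $g$ from it, and merge this deflated part into any other part; the resulting $(k-1)$-partition of $G \setminus \set{g}$ has every part of value at least $\MMS_\ell^k(G)$. Iterating this over the $|S_1|$ distinct singleton goods $\set{g_a : a \in S_1}$ — each removal dropping the part count by one while keeping the maximin value $\ge \mu$ — yields $\MMS_\ell^{\,n-|S_1|}\big([m] \setminus \set{g_a : a \in S_1}\big) \ge \mu$, and hence $v_\ell\big([m] \setminus \set{g_a : a \in S_1}\big) \ge (n - |S_1|)\,\mu$ by the observation above.

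For the second phase I discard the maximin structure entirely and argue purely by value subtraction. The low-value bundles $\set{B_a}_{a \in S_2}$ are disjoint from the singleton goods and from one another, and each has $v_\ell(B_a) \le \mu$, so removing them costs at most $|S_2|\,\mu$ of value. Subtracting gives $v_\ell\big([m] \setminus \cup_{a \in S} B_a\big) \ge (n - |S_1|)\,\mu - |S_2|\,\mu = (n - |S|)\,\mu$, which is exactly the claim.

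The main obstacle, and the reason for splitting into two phases, is that the clean single-good merge argument does \emph{not} extend to a bundle $B_a$ with $v_\ell(B_a) \le \mu$ when $|B_a| \ge 2$: such a bundle can intersect several parts of the optimal maximin partition, deflating many parts simultaneously, so one cannot in general claim $\MMS_\ell^{k-1}(G \setminus B_a) \ge \MMS_\ell^k(G)$ via a single merge. I therefore avoid trying to maintain a maximin-share invariant through the low-value removals; instead I apply maximin preservation only to the singletons (where the merge argument is valid) to secure the $(n-|S_1|)\,\mu$ value bound, and then absorb each low-value bundle with the crude but correct $\le \mu$ loss. A minor point worth stating explicitly is the disjointness of the assigned bundles, which underlies both the additive bookkeeping and the counting of distinct singleton goods.
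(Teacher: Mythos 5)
Your proof is correct and follows essentially the same route as the paper's: both split $S$ into singleton bundles and low-value bundles, apply the Amanatidis et al.\ maximin-preservation inequality $\MMS_\ell^{k-1}(G \setminus \set{g}) \ge \MMS_\ell^k(G)$ iteratively over the singletons, convert the resulting maximin bound into a value bound via additivity, and then subtract at most $\MMS_\ell$ per low-value bundle. The only difference is that you also sketch the merge argument proving the preservation inequality, which the paper simply cites.
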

\begin{proof}
\citet{amanatidis2017approximation} show that removing a single good along with a single agent from the instance does not decrease the maximin share of any remaining agent; in particular, for any integer $k \in \bbN$ and subset of goods $G \subseteq [m]$, the following inequality holds for all $g \in G$ and agents $\ell \in [n]$:
\begin{align}
\MMS^k_\ell(G) \leq \MMS_\ell^{k-1} \left( G \setminus \{ g \} \right)  \label{ineq:shift-mms}
\end{align} 

Among the set of agents $S$ considered in the lemma statement, write $S' \subseteq S$ to denote the ones that receive exactly one good, i.e., $S' \coloneqq \set{ a \in S \ : \ |B_a| = 1 }$. A repeated application of \Cref{ineq:shift-mms} gives us 
\[
\MMS_\ell^{n-|S'|} \left( [m] \setminus \cup_{a \in S'} B_a \right) \geq \MMS_\ell^n([m]) = \MMS_{\ell}.
\]

Furthermore, using the fact that the valuation $v_\ell$ is additive, we get 
\begin{align}
v_\ell\left( [m] \setminus \cup_{a \in S'} B_a \right) \geq (n- |S'|) \cdot \MMS_\ell^{n-|S'|} \left( [m] \setminus \cup_{a \in S'} B_a \right) \geq (n - |S'|) \cdot \MMS_\ell. \label{ineq:mms-step}
\end{align}

For each remaining agent $a \in S \setminus S'$, the assumption in the lemma's statement ensures that $v_\ell(B_a) \leq \MMS_\ell$. Therefore, 
\begin{align*}
v_\ell([m] \setminus \cup_{a \in S} B_a) & = v_\ell \left( [m] \setminus \cup_{a \in S'} B_a \right) - \sum_{a \in S \setminus S'} v_\ell(B_a) \tag{since $v_\ell$ is additive} \\
& \geq (n - |S'|) \cdot  \MMS_\ell - \sum_{a \in S \setminus S'} v_\ell(B_a) \tag{via \Cref{ineq:mms-step}} \\ 
& \geq  (n - |S'|) \cdot \MMS_\ell  - |S \setminus S'| \ \MMS_\ell \\
& = (n-|S|) \cdot \MMS_\ell,
\end{align*}
which is the desired inequality.
\end{proof} 

We now begin our analysis of \algmms. First, we show that certain invariants hold throughout the execution of the algorithm. 
\begin{lemma}\label{lem:algmms-invariants}
The following hold at any stage during the execution of \algmms
\begin{enumerate}
	\item $P \cap T = \emptyset$.
	\item If $i \in P$, $v_i(B_i) \ge \frac{1}{3\sqrt{n}} v_i(W^*_i)$ and $B_i$ is never updated afterwards.
	\item If $i \in P \cup T$, $v_i(B_i) \ge \frac{1}{2} \cdot Z_i$, and $i$ is never removed from $P \cup T$ afterwards. 
	\item $T \subseteq \Ghard$
\end{enumerate}
\end{lemma}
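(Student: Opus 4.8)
The plan is to prove all four invariants simultaneously by induction on the sequence of state-modifying operations of \algmms, since the invariants are mutually dependent: the membership claim $T \subseteq \Ghard$ for a newly inserted agent will hinge on invariants~2 and~3 already holding for agents processed earlier. The base case is immediate, as $P = T = \emptyset$ right after initialization. For the inductive step I would enumerate the few operations that change $P$, $T$, or some bundle $B_i$ --- the zero-$\MMS$ handling (\Cref{step:zero-mms}), the $\Gsingle$ loop (\Cref{step:gamma-single}), the first while loop (\Cref{step:first-while}), the $T\to P$ promotion via the swap (\Cref{step:swap,step:T-to-P}), and the bundle assignment inside the for loop (\Cref{step:assign,step:second-while-assign}) --- and check that each preserves all four invariants, assuming they held beforehand.

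Invariants~1--3 are largely mechanical. For invariant~1 I would note that every fresh addition to $P$ or $T$ draws its agent from $[n]\setminus(P\cup T)$, that the $\Gsingle$ loop and \Cref{step:T-to-P} explicitly delete the promoted agent from $T$ when placing her in $P$, and that no operation ever sends an agent from $P$ back to $T$; hence $P$ and $T$ stay disjoint and agents are never removed from $P\cup T$ (the second clause of invariant~3). For invariant~2 I would read off $v_i(B_i)\ge \frac{1}{3\sqrt n}\,v_i(W^*_i)$ directly from the condition triggering each insertion into $P$ (the $\Gsingle$ definition supplies this at \Cref{step:gamma-single}, and the test at \Cref{step:if-T-to-P} supplies it after the swap), and observe that throughout the loops no bundle of a $P$-agent is revisited, since every in-loop update targets an agent in $[n]\setminus(P\cup T)$ or in $T$; the final cleanup assignment only adds goods, so the value bound persists even there. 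Invariant~3 follows similarly from the $\frac12 Z_a$ thresholds in the two loops, from $Z_i=0$ whenever $\MMS_i=0$, and --- for the swap --- from the fact that an agent $i\in T\subseteq\Ghard$ satisfies $\frac{1}{3\sqrt n}\,v_i(W^*_i) > \frac12 Z_i$, so the post-swap value $v_i(K)\ge \frac{1}{3\sqrt n}\,v_i(W^*_i)$ also clears the $\frac12 Z_i$ bar.

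The main obstacle is invariant~4, $T\subseteq\Ghard$, where the partition $\set{\Gmms,\Gsingle,\Ghard}$ of $[n]$ does the real work. The clean case is an agent $a$ added to $T$ in \Cref{step:first-while} or \Cref{step:second-while-assign}: she lands in $T$ precisely because $v_a(B_a) < \frac{1}{3\sqrt n}\,v_a(W^*_a)$, yet she was eligible only because $v_a(B_a)\ge \frac12 Z_a$. If $a\in\Gmms$ then $Z_a\ge \frac{2}{3\sqrt n}\,v_a(W^*_a)$, forcing $v_a(B_a)\ge \frac{1}{3\sqrt n}\,v_a(W^*_a)$ --- a contradiction; and $a\notin\Gsingle$ because, by the induction hypothesis, every $\Gsingle$ agent was placed in $P$ by the $\Gsingle$ loop and never leaves $P$, whereas $a$ is drawn from $[n]\setminus(P\cup T)$. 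Hence $a\in\Ghard$. The delicate point is \Cref{step:zero-mms}: an agent $i$ with $\MMS_i=0$ and $v_i(W^*_i)>0$ has $Z_i=0 < \frac{2}{3\sqrt n}\,v_i(W^*_i)$, so $i\in\Gsingle\cup\Ghard$, and when $i\in\Gsingle$ she is momentarily parked in $T$ before the $\Gsingle$ loop executes. I would resolve this by treating \Cref{step:zero-mms} together with the $\Gsingle$ loop as a single initialization phase, after which every such $\Gsingle$ agent has been promoted to $P$ and the residual $T$ contains only $\Ghard$ agents; the while and for loops then maintain $T\subseteq\Ghard$ by the argument above.
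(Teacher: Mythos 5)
Your proof is correct and takes essentially the same approach as the paper's: inspect each line where $P$, $T$, or a bundle can change, and establish $T \subseteq \Ghard$ by arguing that every $\Gsingle$ agent is already in $P$ when the loops run, while any $\Gmms$ agent clearing the $\frac{1}{2} Z_i$ threshold would be placed directly into $P$. You are in fact more careful than the paper on one corner case: an agent $i \in \Gsingle$ with $\MMS_i = 0$ (hence $Z_i = 0$) and $v_i(W^*_i) > 0$ sits in $T$ between \Cref{step:zero-mms} and her promotion in \Cref{step:gamma-single}, a transient violation that the paper's argument (which only addresses agents placed in $T$ ``upon receiving a bundle'') passes over, and which your convention of treating the two initialization steps as a single atomic phase legitimately repairs.
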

\begin{proof}
	The first two invariants are easy to verify by considering at the pseudocode of the algorithm. For the third invariant, note that once an agent $i$ is added to $P \cup T$, it is never removed (but can move from $T$ to $P$). 
	
	Next we show that when $i \in P \cup T$, we must have $v_i(B_i) \ge \frac{1}{2} \cdot Z_i$. Let us begin by showing that when $i$ is \emph{first} added to $P \cup T$, we have $v_i(B_i) \ge \frac{1}{2} \cdot Z_i$. This can happen in one of Lines \ref{step:gamma-single}, \ref{step:first-while}, or \ref{step:second-while-assign}. In each of these cases, one can verify that the condition for selection of $i$ ensures that $v_i(B_i) \ge \frac{1}{2} \cdot Z_i$ at the time it is included in $P \cup T$. We also need to consider Line \ref{step:T-to-P}, in which the bundle of an agent $i$, already added to $T$, can be updated. Note that, for such an agent $i$, we cannot have $i \in \Gsingle$, as agents in $\Gsingle$ are immediately added to $P$ in \Cref{step:gamma-single}. Similarly, we also cannot have $i \in \Gmms$ because in that case, $\frac{1}{2} \cdot Z_i \ge \frac{1}{3\sqrt{n}} \cdot v_i(W^*_i)$; hence, when $i$ was first added to $P \cup T$, it would have been added to $P$ directly. Therefore, for an agent $i$ considered in Line \ref{step:T-to-P} we must have $i \in \Ghard$. However, in that case, the selection criterion in Line \ref{step:T-to-P} implies that $v_i(B_i) \ge \frac{1}{3\sqrt{n}} \cdot v_i(W^*_i) \ge \frac{1}{2} \cdot Z_i$, as desired. 
	
	Finally, we show that $T \subseteq \Ghard$ throughout the execution of the algorithm. This is because each agent $i \in \Gsingle$ is added to $P$ at Line \ref{step:gamma-single}, and since $T \cap P = \emptyset$, we have that any agent $i \in \Gsingle$ cannot be in $T$. Similarly for agents in $\Gmms$, recall  that whenever an agent $i \in \Gmms$ is added to $P \cup T$ she must have received a bundle worth at least $\frac{1}{2}\cdot Z_i$ to her. Since $\frac{1}{2}\cdot Z_i \ge \frac{1}{3\sqrt{n}}\cdot v_i(W^*_i)$ for all $i \in \Gmms$, agent $i$ would be added included directly into $P$ at this point. Therefore, any agent that is placed in $T$ upon receiving a bundle must belong to the set $\Ghard$, i.e., the containment $T \subseteq \Ghard$ holds throughout the execution of \algmms. 	 
\end{proof}

Next, we show that the algorithm terminates in polynomial time, and certain desired properties hold at its termination. As we will see later, these properties are key to establishing the desired welfare guarantee of the algorithm. 
\begin{lemma}\label{lem:termination}
	\algmms terminates in polynomial time, and at its termination, the following hold.
	\begin{enumerate}
		\item $P \cup T = [n]$.
		\item $|T| \le 4\sqrt{n}$.
	\end{enumerate}
\end{lemma}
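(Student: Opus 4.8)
The plan is to treat the three claims — polynomial running time, $P \cup T = [n]$, and $|T| \le 4\sqrt{n}$ — separately, reusing the invariants of \Cref{lem:algmms-invariants} and the maximin-share accounting of \Cref{lemma:mms-flow}.

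Running time is routine. For additive valuations a welfare-maximizing $\calW^*$ is obtained by giving each good to an agent valuing it most, and (as the pseudocode notes) the test $\MMS_i = 0$ reduces to counting the goods $i$ values positively; both are polynomial. Computing $\Gsingle$ and the loop over it touch each (agent, good) pair a constant number of times. Each iteration of the first while loop removes one agent from $[n]\setminus(P\cup T)$ and, by \Cref{lem:algmms-invariants}, agents never leave $P\cup T$, so it runs at most $n$ times; the main for loop runs exactly $m$ times with $\poly(n,m)$ work per iteration. Hence \algmms runs in polynomial time.

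For $P\cup T = [n]$, I would fix an agent $a$ and show it must be placed. Agents with $\MMS_a = 0$ are placed in \Cref{step:zero-mms} and agents of $\Gsingle$ are placed in $P$ in \Cref{step:gamma-single}, so it suffices to treat $a \in \Gmms \cup \Ghard$ with $\MMS_a > 0$ (hence $Z_a > 0$), and I assume towards a contradiction that $a \notin P\cup T$ at termination. The key invariant is that after the swap and assignment tests of each iteration of the main for loop, no unassigned agent $a'$ has $v_{a'}(K) \ge \tfrac12 Z_{a'}$ (otherwise \Cref{step:assign} would have fired); in particular $v_a(K) < \tfrac12 Z_a$ throughout, and since the first while loop terminated, every single good $h$ remaining in $R$ has $v_a(h) < \tfrac12 Z_a$. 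From these two facts, every bundle assigned during the for loop — the grown $K$ of \Cref{step:assign} or the bundle handed to an agent in the swap of \Cref{step:swap} — has $a$-value below $\tfrac12 Z_a + v_a(g_t) < Z_a \le \MMS_a$, while every pre-loop bundle is a singleton or empty. I can then invoke \Cref{lemma:mms-flow} with $\ell = a$ and $S$ the set of agents in $P\cup T$, concluding that the leftover $X = [m]\setminus \cup_b B_b$ satisfies $v_a(X) \ge (n-|P\cup T|)\,\MMS_a \ge \MMS_a$. But $X$ equals the final value of the running bundle $K$, and the invariant gives $v_a(X) = v_a(K) < \tfrac12 Z_a \le \tfrac12\MMS_a$, a contradiction; so every agent lands in $P\cup T$.

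For $|T|\le 4\sqrt{n}$, I would run a line-sweep counting argument over the blocks $W^*_1,\dots,W^*_n$, which are contiguous by \Cref{step:reindex}. By \Cref{lem:algmms-invariants} every surviving $i \in T$ lies in $\Ghard$ and, not having been upgraded to $P$, has $v_i(B_i) < \tfrac{1}{3\sqrt n} v_i(W^*_i)$ with $v_i(W^*_i) > 0$. I break each block $W^*_i$ into the maximal contiguous \emph{pieces} cut out by the pre-assigned singletons and by the bundle boundaries formed during the for loop. The two defining properties of $\Ghard$ — every good of $W^*_i$ is worth less than $\tfrac1{3\sqrt n} v_i(W^*_i)$, and the swap test of \Cref{step:if-T-to-P} never fired for $i$, so whenever a $W^*_i$-good entered $K$ one had $v_i(K) < \tfrac{1}{3\sqrt n} v_i(W^*_i)$ — together force each piece to contribute strictly less than $\tfrac{1}{3\sqrt n} v_i(W^*_i)$ to $v_i(W^*_i)$. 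Summing over pieces shows $W^*_i$ is split into more than $3\sqrt n$ pieces for each $i \in T$. Finally, the pre-assigned singletons and the for-loop bundles induce $O(n)$ cells along the line, so by the same line incidence bound used in \Cref{ineq:intersection-count} the total number of (block, piece) incidences is $O(n)$; since $3\sqrt{n}\cdot|T|$ is at most this incidence count, tracking the constants yields $|T| \le 4\sqrt n$.

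The main obstacle is the $|T|\le 4\sqrt n$ bound: making the pieces argument airtight requires pinning down exactly how the swap in \Cref{step:swap} reshuffles $K$ and the bundles, so that the per-piece value bound $\tfrac1{3\sqrt n}v_i(W^*_i)$ survives across swaps and the incidence count capping the number of pieces at $O(n)$ carries the right constant. By comparison, the $P\cup T = [n]$ argument is clean once the per-iteration invariant on $K$ is stated correctly, and the running-time claim is immediate.
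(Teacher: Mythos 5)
Your treatment of the first two claims is correct and essentially identical to the paper's. For $P \cup T = [n]$ you use the same ingredients: the end-of-iteration invariant that no unassigned agent $a$ has $v_a(K) \ge \frac12 Z_a$, the observation that every assigned bundle is a singleton or worth at most $\MMS_a$ to the left-out agent $a$, and \Cref{lemma:mms-flow} applied with $S = P \cup T$ to force $v_a(X) \ge \MMS_a$, contradicting $v_a(X) = v_a(K) < \frac12 Z_a$. One small imprecision: when the bundle released by a swap in \Cref{step:swap} is itself handed out in \Cref{step:assign}, its value bound is not ``$\frac12 Z_a + v_a(g_t)$'' (it need not contain $g_t$); it comes from induction on earlier assignments, which the paper carries out explicitly.

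The $|T| \le 4\sqrt{n}$ argument is where you genuinely diverge, and it has a real gap. Your incidence count assumes that the pre-assigned singletons and the for-loop bundles induce $O(n)$ \emph{contiguous} cells on the line, so that the sweep bound behind \Cref{ineq:intersection-count} applies. But the for-loop bundles are not connected bundles in the line graph of \Cref{step:reindex}: a bundle assigned in \Cref{step:assign} omits every pre-assigned singleton lying in its index range, and, worse, after a swap the growing set $K$ becomes the swapped-out bundle, which may be a singleton assigned in the while loop of \Cref{step:first-while} with an arbitrary --- possibly much higher --- index. Such goods then ride along inside later bundles, so a single bundle can interleave with many blocks $W^*_j$, and the ``each new incidence enters a new cell or a new block'' argument no longer caps the incidences by (number of cells) $+$ (number of blocks). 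Repairing this requires bounding the total fragmentation created by singleton holes and swaps, which is exactly the step you flag as unresolved. A second, fixable, issue: the per-piece bound $\frac{1}{3\sqrt{n}} v_i(W^*_i)$ only follows from the non-firing of \Cref{step:if-T-to-P} while $i \in T$; for bundles assigned before $i$ enters $T$ the available bound is $v_i(K) < \frac12 Z_i + v_i(g_t) < \frac{2}{3\sqrt{n}} v_i(W^*_i)$, which halves your piece count and eats into your constant.

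The paper avoids the geometry altogether by charging \emph{agents} rather than pieces: each agent $j$ is placed in a set $C_i$, where $W^*_i$ contains the highest-indexed good of the bundle $j$ holds when it first enters $P \cup T$. The sets $C_1,\ldots,C_n$ automatically partition $[n]$, so no incidence bound is needed; then for each $i$ in the final $T$ one shows that at least $\sqrt{n}-1$ assignments occur during the iterations spanning $W^*_i$ (each worth at most $\frac{2}{3\sqrt{n}} v_i(W^*_i)$ to $i$, against $v_i(R \cap W^*_i) \ge \frac23 v_i(W^*_i)$), and all of these are charged to $C_i$. Disjointness gives $|T|\,(\sqrt{n}-1) \le n$, hence $|T| \le 4\sqrt{n}$ for $n \ge 2$. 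To complete your proof you would either need to adopt such a charging scheme or supply the missing fragmentation bound for your cells.
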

\begin{proof}
	\noindent\textbf{Termination:} To argue that the algorithm terminates, we must show that the while loop (\Cref{step:first-while}) terminates. This is easy to see because in each iteration of the loop, a new agent is added to $P \cup T$, and by the third invariant in \Cref{lem:algmms-invariants}, it is never removed from $P \cup T$ once added. It is easy to verify that the algorithm runs in polynomial time. 
	
	Let $\overline{P}$ and $\overline{T}$ denote the values of sets $P$ and $T$ at the termination of the algorithm, respectively. We want to show that $\overline{P} \cup \overline{T} = [n]$ and $|\overline{T}| \le 4\sqrt{n}$. 
	
	\medskip\noindent\textbf{$\mathbf{\overline{P} \cup \overline{T} = [n]}$:} Because $P \cup T$ grows monotonically (\Cref{lem:algmms-invariants}), we only need to show that each agent is included in $P \cup T$ at some point. Suppose, towards a contradiction, that agent $i$ is never added to $P \cup T$. 
	Since $i$ is not added to $P \cup T$ during the first while loop in Line \ref{step:first-while}, it must be the case that 
	\begin{equation}\label{eqn:small-single-good}
	v_i(g) < \frac{1}{2} \cdot Z_i, \forall g \in R.
	\end{equation}
	Otherwise, the while loop would not have terminated with goods in $R$ left unassigned. 
	
	We want to show that whenever an agent $j \neq i$ is assigned a nonempty bundle $B_j$ during the execution of the algorithm, $B_j$ satisfies the condition of \Cref{lemma:mms-flow}, i.e., either $|B_j| = 1$ or $v_i(B_j) \le \MMS_i$. We prove this by induction on the progress of the algorithm. Note that bundles assigned during Lines \ref{step:gamma-single} and \ref{step:first-while} are singleton, and therefore trivially satisfy the former condition. 
	
	Consider a bundle $K$ is assigned to an agent in the $t^{\text{th}}$ iteration of the for loop in Line \ref{step:second-while}, where $g_t \in R$ (that is, $g_t$ was unassigned at the start of the loop, and was thus added to $K$). If $g_t \notin K$, then the swap in Line \ref{step:second-while-assign} must have been executed. In this case, $K$ is the bundle that some agent was previously assigned, and the induction hypothesis implies that $K$ satisfies the condition of \Cref{lemma:mms-flow}. If $g_t \in K$, then we consider two cases. If $K = \set{g_t}$, then the former condition trivially holds. If $|K| \ge 2$, then note that $K \setminus \set{g_t}$ was not assigned in the previous iteration of the for loop. Since we had $i \in [n] \setminus (P \cup T)$, it must have been the case that $v_i(K \setminus \set{g_t}) < \frac{1}{2} \cdot Z_i$. Combining this with \Cref{eqn:small-single-good} and using the additivity of $v_i$, we have $v_i(K) \le Z_i \le \MMS_i$. Hence, the latter condition is satisfied. 
	
	Consider the set of unassigned goods $X$ at the end of the for loop in \Cref{step:second-while}. We have established that at this stage, for each $j \in P \cup T$, either $|B_j| = 1$ or $v_i(B_j) \le \MMS_i$. Using \Cref{lemma:mms-flow}, we have $v_i(X) \ge (n-|P \cup T|) \cdot \MMS_i \ge \MMS_i > 0$, where the penultimate inequality follows because $|P \cup T| \le n-1$ as $i \notin P \cup T$, and the final inequality follows because all agents $i$ with $\MMS_i = 0$ are added to $P \cup T$ in Line \ref{step:zero-mms} and then never removed from it (\Cref{lem:algmms-invariants}). Note that $v_i(X) > 0$ implies $X \neq \emptyset$. Then, in the last iteration of the for loop, we must have had $K = X$. This equality follows from the observation that as soon as any good gets unassigned it is included in the set $K$ (Line \ref{step:swap}). Hence, for every iteration count $t$, the set $K$ contains all unassigned goods from $\set{g_1,\ldots,g_t}$. Furthermore, note that at termination $K=X$ was not assigned to any remaining agent. 	
However, since $i \in [n] \setminus (P \cup T)$ and $v_i(X) \ge \MMS_i$, agent $i$ should have been assigned $K=X$ in Line \ref{step:second-while-assign}, which is the desired contradiction.

\medskip\noindent\textbf{$\mathbf{|\overline{T}| \le 4 \sqrt{n}}$:} \Cref{lem:algmms-invariants} ensures that the containment $T \subseteq \Ghard$ holds throughout the execution of the algorithm. Hence, in particular, we have $\overline{T} \subseteq \Ghard$.

Using this property and a charging argument, we will establish the inequality $|\overline{T}| \le \sqrt{n}$. We know from $\overline{P} \cup \overline{T} = [n]$ and monotonic growth of $P \cup T$ that each agent is added to $P \cup T$ exactly once during the execution of the algorithm. Define a partition $(C_1,\ldots,C_n)$ of the set of agents $[n]$ as follows. For each agent $j$, consider the bundle $B_j$ that she is assigned the first time she is added to $P \cup T$ (i.e., in one of Lines \ref{step:gamma-single}, \ref{step:first-while}, or \ref{step:second-while-assign}, but not in Line \ref{step:T-to-P}), and add $j$ to $C_i$, where agent $i$ is such that $W^*_i$ contains the highest indexed good in $B_j$, according to the indexing in Line \ref{step:reindex}. That is, the set $C_i$ populates agents $j$ whose initial bundle $B_j$ ends in $W^*_i$. Our goal is to show that for each $i \in \overline{T}$, $|C_i| \ge \sqrt{n}-1$, which (for $n \geq 2$) would imply $|\overline{T}| \le 4 \sqrt{n}$ .
	
	To see this, fix $i \in \overline{T}$, and consider a partition of $C_i$ into $C^1_i$ and $C^2_i$, where $C^1_i$ contains agents in $C_i$ who were first added to $P \cup T$ in either Line \ref{step:gamma-single} or Line \ref{step:first-while}, and $C^2_i$ contains agents in $C_i$ who were first added to $P \cup T$ in Line \ref{step:second-while-assign}. If $|C^1_i| \ge \sqrt{n}$, then trivially $|C_i| \ge \sqrt{n}$, and we are done. Thus, assume $|C^1_i| < \sqrt{n}$.
	
	Note that $C^1_i$ consists of all agents who were assigned a good from $W^*_i$ before the start of the for loop in Line \ref{step:second-while}. Since $i \in \overline{T} \subseteq \Ghard$, $v_i(g) < \frac{1}{3\sqrt{n}} \cdot v_i(W^*_i)$ for each $g \in W^*_i$. Since there were less than $\sqrt{n}$ such assignments, we have that for the goods $R$ left unassigned before the for loop,
	\[
	v_i(R \cap W^*_i) \ge v_i(W^*_i) - \sqrt{n} \cdot \frac{1}{3\sqrt{n}} \cdot v_i(W^*_i) \ge \frac{2}{3} \cdot v_i(W^*_i).
	\]
	
	Let $W^*_i = \set{g_t : t_1 \le t \le t_2}$ (the indexing in Line \ref{step:reindex} ensures that $W^*_i$ has consecutively indexed goods). We show that during iterations $t_1$ through $t_2$ of the for loop, at least $\sqrt{n}$ agents will be added to $C_i$. Note that because agent $i$ is never added to $P$, the if condition in Line \ref{step:if-T-to-P} must never be executed during these iterations. Hence, any bundle $K$ assigned in Line \ref{step:second-while-assign} during iterations $t_1$ through $t_2$ must be worth at most $\frac{2}{3\sqrt{n}} \cdot v_i(W^*_i)$ to agent $i$: if agent $i$ is in the set $T$, then the non-execution of the if-clause in Line \ref{step:if-T-to-P} gives us $v_i(K) < \frac{1}{3\sqrt{n}} \cdot v_i(W^*_i)$. Otherwise, even if $i \in \Ghard$ is not in $T$, then the removal of the highest-index good in $K$ must reduce its value below $\frac{1}{2} \cdot Z_i \leq \frac{1}{3 \sqrt{n}} v_i(W*_i)$. The definition of $\Ghard$ ensures this highest-index good is of value at most $\frac{1}{3 \sqrt{n}} v_i(W*_i)$, and hence, $v_i(K) < \frac{2}{3\sqrt{n}} \cdot v_i(W^*_i)$. 
	
	Since $v_i(R \cap W^*_i) \ge \frac{2}{3} \cdot v_i(W^*_i)$, at least $\sqrt{n}-1$ such assignments must be made during iterations $t_1$ through $t_2$, where we subtract $1$ for the last batch of goods from $W^*_i \cap R$ (again, worth at most $\frac{2}{3\sqrt{n}} \cdot v_i(W^*_i)$) that may be assigned after iteration $t_2$. Hence, we have $|C_i| \ge \sqrt{n}-1$ in this case as well. This completes the charging argument, and thus the proof that $\overline{T} \le 4 \sqrt{n}$. 
\end{proof}

We are now ready to establish the desired guarantees of \algmms from \Cref{lem:alg-mms}.

\begin{proof}[Proof of \Cref{lem:alg-mms}]
	The fact that \algmms terminates in polynomial time is proved in \Cref{lem:termination}. We need to show that it returns a $\left(\frac{1}{2}-\epsilon\right)$-$\MMS$ allocation with the desired welfare guarantee. 
	
	As before, let $\overline{P}$ and $\overline{T}$ denote the values of $P$ and $T$ at the termination of the algorithm, respectively. \Cref{lem:termination} gives us $\overline{P} \cup \overline{T} = [n]$, and \Cref{lem:algmms-invariants} ensures that for each $i \in \overline{P} \cup \overline{T} = [n]$, $v_i(B_i) \ge \frac{1}{2} \cdot Z_i \ge \frac{1}{2} \cdot (1-\epsilon) \cdot \MMS_i \ge \left(\frac{1}{2}-\epsilon\right) \cdot \MMS_i$. Thus, the allocation returned by \algmms is $\left(\frac{1}{2}-\epsilon\right)$-$\MMS$. 
	
	For the welfare guarantee, note that for each $i \in \overline{P}$, \Cref{lem:algmms-invariants} shows that $v_i(B_i) \ge \frac{1}{3 \sqrt{n}} \cdot v_i(W^*_i)$. And \Cref{lem:termination} shows that $|\overline{T}| \le 4 \sqrt{n}$. Combining them, we have 
	\begin{align*}
	\OPT = \sum_{i \in [n]} v_i(W^*_i) = \sum_{i \in \overline{P}} v_i(W^*_i) + \sum_{i \in \overline{T}} v_i(W^*_i) \leq 3 \sqrt{n} \sum_{i \in \overline{P}} v_i(B_i) + |\overline{T}| \leq 3 \sqrt{n} \sum_{i \in [n]} v_i(B_i) + 4 \sqrt{n},
	\end{align*}
	where the third transition holds because the valuations are scaled.
\end{proof}

\subsection{Proof of Theorem \ref{theorem:mms}}

Using the properties of \algmmsabs and \algmms from \Cref{sec:mms-absolute,sec:mms-high-opt}, we can directly prove the upper bounds on the price of \hmms in \Cref{theorem:mms}. For the lower bounds, we use constructions that appeared in the work of \citet{caragiannis2012efficiency}. We restate and prove the theorem below.

\TheoremMMS*
\begin{proof}
\noindent\textbf{Unscaled valuations:} \Cref{lemma:mms-half} shows that \algmmsabs returns a \hmms allocation $\calB$ with social welfare $\sw(\calB) \ge \frac{1}{3n} \sum_{i=1}^n v_i([m])$. Because the optimal social welfare is trivially upper bounded as $\OPT \le \sum_{i=1}^n v_i([m])$, we have that the price of \hmms is at most $3n =O(n)$. 

To show that this is tight, consider a fair division instance with $n$ agents with the following (unscaled) additive valuations over $n$ goods. Fix $\varepsilon \in (0,1)$. Let $v_1(j) = 1$ for each $j \in [n]$. For each $i \in [n] \setminus \set{1}$, $v_i(j) = \varepsilon$ for each $j \in [m]$. Thus, the goods are identical for all agents. It is easy to see that $\MMS_1 = 1$ while $\MMS_i = \varepsilon$ for each $i \in [n] \setminus \set{1}$. Hence, the only \hmms allocation is to give each agent a single good. This generates social welfare $1+(n-1)\varepsilon$, whereas the optimal social welfare is $n$, obtained by assigning all the goods to agent $1$. The ratio $n/(1+(n-1)\varepsilon)$ converges to $n$ as $\varepsilon \to 0$. Hence, the price of \hmms is at least $n = \Omega(n)$. 

\medskip\noindent\textbf{Scaled valuations:} Again, from \Cref{lemma:mms-half}, we know that \algmmsabs (with the input $Z_i = \MMS_i$ for each agent $i$) returns a \hmms allocation $\calB$ with $\sw(\calB) \ge \frac{1}{3n} \sum_{i=1}^n v_i([m]) = \frac{1}{3}$ for scaled valuations. Hence, if $\OPT \le 5\sqrt{n}$, then the \hmms allocation returned by \algmmsabs gives a $15\sqrt{n}$ approximation to the optimal social welfare.

Otherwise, suppose $\OPT \ge 5\sqrt{n}$. Then by \Cref{lem:alg-mms}, we know that \algmms returns an allocation with social welfare 
$
\sw(\calB) \ge \frac{\OPT-4\sqrt{n}}{3\sqrt{n}} \ge \frac{\OPT - \nicefrac{4}{5} \OPT}{3\sqrt{n}} = \frac{\OPT}{15\sqrt{n}}.
$
Hence, in this case, the \hmms allocation returned by \algmms provides a $15\sqrt{n}$ approximation to $\OPT$. 

We have established that in either case, the price of \hmms is at most $15\sqrt{n} = O(\sqrt{n})$. For the lower bound, we use a construction from \citet{caragiannis2012efficiency}, which later appeared in several other papers on the price of fairness. The instance has $n$ agents and goods, with the following scaled additive valuations:

\begin{itemize}
\item For $i \le \floor{\sqrt{n}}$, agent $i$ has value $1/\floor{\sqrt{n}}$ for the $\floor{\sqrt{n}}$ goods $[(i-1) \floor{\sqrt{n}} + 1$ to $i \floor{\sqrt{n}}]$, and value $0$ for the remaining goods. Call such an agent a \emph{high} agent.
\item For $i > \floor{\sqrt{n}}$, agent $i$ has value $1/n$ for all $n$ goods. Call such an agent a \emph{low} agent. 
\end{itemize}

It is easy to check that for each low agent $i$, $\MMS_i = 1/n$. Thus, any \hmms allocation must give each such agent at least one good. The social welfare of any such allocation is at most 
$
(n-\floor{\sqrt{n}}) \cdot \frac{1}{n} + \floor{\sqrt{n}} \cdot \frac{1}{\floor{\sqrt{n}}} \le 2
$, where the second term in the summation is due to the fact that each of the remaining (at most) $\floor{\sqrt{n}}$ goods which may be assigned to high agents can contribute at most $1/\floor{\sqrt{n}}$ to the social welfare. In contrast, the social welfare generated by assigning each high agent the $\floor{\sqrt{n}}$ distinct goods she values positively is at least $\floor{\sqrt{n}} \cdot 1 = \floor{\sqrt{n}}$. Hence, the price of \hmms is at least $\floor{\sqrt{n}}/2 = \Omega(\sqrt{n})$. 
\end{proof}

\section{Discussion}\label{sec:disc}
In this work, we focus on the allocation of indivisible goods to $n$ agents. We show that the price of EF1 is $O(\sqrt{n})$ when agent valuations are subadditive (and $\Omega(\sqrt{n})$ even if they are additive), whereas the price of $\nicefrac{1}{2}$-$\MMS$ is $\Theta(\sqrt{n})$ when agent valuations are additive. 

An immediate future direction is to analyze the price of $\nicefrac{1}{2}$-$\MMS$ under more general valuation classes. But even for additive valuations, it would be interesting to consider other (combinations of) fairness and efficiency guarantees. The following are just a few interesting examples. 
\begin{enumerate}
	\item EF1+PO (Pareto optimality), and EF1+fPO (fractional Pareto optimality): fPO implies PO, and both of these allocations are guaranteed to exist~\cite{barman2018finding,CKMP+19}.
	\item $\alpha$-$\MMS$ for $\alpha > \nicefrac{1}{2}$: Even $\nicefrac{3}{4}$-$\MMS$ allocations are guaranteed to exist~\cite{ghodsi2018fair,garg2020improved}.
	\item $\alpha$-GMMS (groupwise maximin share): GMMS implies MMS, and $\nicefrac{1}{2}$-GMMS allocations are guaranteed to exist~\cite{barman2018groupwise}.
	\item EFX (envy-freeness up to any good): EFX is a strengthening of EF1~\cite{CKMP+19}. While it is an open question whether EFX allocations always exist with more than three agents (see~\cite{ChaudhuryGM20} for the case of three agents), one can still define its price by excluding instances (if any) where such allocations do not exist. 
\end{enumerate}

Another direction would be to analyze the price of fairness when the items are \emph{chores}~\cite{aziz2017algorithms}, or a mixture of goods and chores~\cite{aziz2019fair}. More broadly, while formal definitions of fairness and their corresponding prices are relatively well-understood within fair division theory, they are much less explored within other paradigms of social choice theory such as voting or matching. Developing a comprehensive understanding of what fairness means and what its price is requires significant future work.

\paragraph*{Acknowledgements.} SB gratefully acknowledges the support of a Ramanujan Fellowship (SERB - {SB/S2/RJN-128/2015}) and a Pratiksha Trust Young Investigator Award. UB's research is generously supported the Department of Atomic Energy, Government of India (project no. RTI4001), a Ramanujan Fellowship (SERB - SB/S2/RJN-055/2015), and an  Early Career Research Award (SERB - ECR/2018/002766). NS was partially supported by an NSERC Discovery Grant.

\bibliographystyle{splncs04nat}

\bibliography{fairness,abb,ultimate}

\appendix
\section*{Appendix}

\section{Price of Fairness with Supermodular Valuations}\label{sec:supermodular}

Let us consider two more classes of agent valuations.
\begin{itemize}
	\item Superadditive: $v_i(S \cup T) \ge v_i(S) + v_i(T)$ for each agent $i \in [n]$ and all disjoint subsets $S,T \subseteq [m]$.
	\item Supermodular: $v_i(S \cup T) + v_i(S \cap T) \ge v_i(S) + v_i(T)$ for each agent $i \in [n]$ and all subsets $S,T \subseteq [m]$.
\end{itemize}

Clearly, if agent valuations are supermodular, then they are also superadditive. Given that we establish the price of $\EFone$ for subadditive valuations (\Cref{theorem:efone}), one might wonder what the price of $\EFone$ is for the complementary class of superadditive valuations. We show that even if the valuations are supermodular and identical across agents, the price of $\EFone$ is unbounded in terms of $n$ and $m$. 

\begin{example}
	Suppose there are $n$ agents with the following identical valuation function $v$ over $m=n$ goods. For all $S \subseteq [m]$, 
	\[
	v(S) = \begin{cases}
	\epsilon\, |S|, &\text{if $|S| \le 1$},\\
	\epsilon + (|S|-1) \cdot L, &\text{if $|S| \ge 1$}.\\
	\end{cases}
	\]
	Here, $\epsilon$ is very small, and $L  = (1-\epsilon)/(m-1)$. It can be checked that the valuation function is supermodular, and that $v([m]) =1$ (and hence valuations are scaled).
	
	The optimal allocation assigns all goods to a single agent, for a social welfare of $1$. However in an $\EFone$ allocation, each agent must receive one good, and hence the maximum social welfare that can be achieved by any $\EFone$ allocation is $n \epsilon$. Hence, the price of $\EFone$ is at least $1/(n \epsilon)$. Since $\epsilon$ can be arbitrarily small, the price of $\EFone$ is unbounded in terms of $n$ and $m$ even for identical supermodular valuations.
\end{example}

\section{Lower Bounds for the Price of Proportionality Up To One Good}

In this section, we show how to modify the construction from the proof of \Cref{theorem:efone} and that due to \citet{BeiLMS19} to derive that the price of Prop1 is $\Omega(n)$ for unscaled additive valuations and $\Omega(\sqrt{n})$ for scaled additive valuations. In each case, the key modification is to have $n+1$ goods instead of $n$ goods. 

\begin{example}[Unscaled Additive Valuations]
	Consider an instance with $n$ agents and $m=n+1$ goods. Suppose the agents have the following additive valuations over the goods: the first agent has value $n+1$ for each good and every other agent has value $1/(n+1)$ for each good. The optimal social welfare is achieved by allocating all goods to the first agent, yielding $\OPT = (n+1)^2$. In contrast, it is easy to see that any Prop1 allocation must give at least one good to each of agents $2,\ldots,n$. Hence, any Prop1 allocation has welfare less than $2n+3$. This implies the desired $\Omega(n)$ bound on the price of Prop1 under unscaled additive valuations. 
\end{example}

\begin{example}[Scaled Additive Valuations]
	Consider an instance with $n$ agents and $m=n+1$ goods. For the ease of exposition, suppose that $\sqrt{n}$ is an integer. For $k \in [\sqrt{n}]$, define $T_k = \set{(k-1)\sqrt{n}+1,\ldots,k\sqrt{n}}$. Note that $T_1,\ldots,T_{\sqrt{n}}$ partition goods $1,\ldots,n$. 
	
	Suppose the agents have the following scaled additive valuations. 
	\begin{itemize}
		\item For $i \le \sqrt{n}$, agent $i$ has value $1/\sqrt{n}$ for each of $\sqrt{n}$ goods in $T_i$, and value $0$ for every other good.
		\item For $i > \sqrt{n}$, agent $i$ has value $1/(n+1)$ for each of $n+1$ goods.  
	\end{itemize}
	
	The optimal social welfare is achieved by partitioning the first $n$ goods among the first $\sqrt{n}$ agents, and giving the last good to a later agent, yielding $\OPT = \sqrt{n}+\frac{1}{n+1} = \Omega(\sqrt{n})$. 
	
	In contrast, any Prop1 allocation must give agent $i$ at least one good for each $i > \sqrt{n}$. Thus, any Prop1 allocation has welfare at most $(\sqrt{n}+1) \cdot \frac{1}{\sqrt{n}} + (n-\sqrt{n}) \cdot \frac{1}{n+1} = O(1)$. This implies the desired $\Omega(\sqrt{n})$ bound on the price of Prop1 under scaled additive valuations. 
\end{example}

\end{document}